\newcommand{\citet}[1]{\citeauthor*{#1}~\cite{#1}}
\newcommand{\kibitz}[2]{\ifnum\Comments=1{\color{#1}{#2}}\fi}
\title{Sequential Blocked Matching}
\author{%
   Nicholas Bishop \\
  University of Southampton, UK\\
  \texttt{nb8g13@soton.ac.uk} 
  \and
  Hau Chan \\
  University of Nebraska-Lincoln, USA\\
  \texttt{hchan3@unl.edu} 
  \and
  Debmalya Mandal \\
  Max Planck Institute for \\Software Systems, Germany\\
  \texttt{dmandal@mpi-sws.org} 
  \and
  Long Tran-Thanh \\
  University of Warwick, UK\\
  \texttt{long.tran-thanh@warwick.ac.uk} \\
}
\newtheorem{defn}{Definition}
\newtheorem{thm}{Theorem}
\newtheorem{lemma}{Lemma}
\newcommand{\argmax}{\text{argmax}}
\newcommand{\E}{\mathbb{E}}
\newcommand{\set}[1]{\left\{ #1 \right\}}
\newcommand{\opt}{\mathrm{OPT}}
\newcommand{\baserank}{$\mathrm{BASE}$}
\newcommand{\SW}{\textrm{SW}}
\begin{document}
\SetKwInput{KwInput}{Input}                
\SetKwInput{KwOutput}{Output}
\maketitle

\begin{abstract}
  We consider a sequential blocked matching (SBM) model 
  where strategic agents repeatedly report ordinal preferences over a set of services to a central planner. 
  The planner's goal is to elicit agents' true preferences and design a policy that matches services to agents in order to maximize the expected social welfare with the added constraint that 
  each matched service can be \emph{blocked} or unavailable for a number of time periods. 
  Naturally, SBM models the repeated allocation of reusable services to a set of agents where each allocated service becomes unavailable for a fixed duration. 

  We first consider the offline SBM setting, where the strategic agents are aware of their true preferences.  
  We measure the performance of any policy by \emph{distortion}, the worst-case multiplicative approximation guaranteed by any policy.
  For the setting with $s$ services, we establish lower bounds of $\Omega(s)$ and $\Omega(\sqrt{s})$ on the distortions of any deterministic and randomised mechanisms, respectively. 
  We complement these results by providing approximately truthful, measured by \emph{incentive ratio}, deterministic and randomised policies based on random serial dictatorship which match our lower bounds. Our results show that there is a significant improvement if one considers the class of randomised policies. 
  Finally, we consider the online SBM setting with bandit feedback where each agent is initially unaware of her true preferences, and the planner must facilitate each agent in the learning of their preferences through the matching of services over time.
  We design an approximately truthful mechanism based on the Explore-then-Commit paradigm, which achieves logarithmic dynamic approximate regret.
\end{abstract}


\section{Introduction}

In recent years, machine learning algorithms have been extremely successful in various domains, from playing games to screening cancer. However, despite such success, most learning algorithms cannot be deployed directly in practice to make decisions under uncertainty. The main reason is that most real-world applications involve multiple agents, and learning algorithms are often constrained due to the unavailability of resources. Motivated by such constraints in multi-agent systems, we consider the problem of repeated matching with blocking constraints, a scenario where multiple agents simultaneously learn their preferences with repeated blocking or unavailability of resources.

In particular, we are interested in the repeated \emph{one-sided matching} problem where strategic agents report their ordinal preferences based on their expected rewards over a set of alternatives, or \emph{services}. The agents are matched to the services given their reported preferences each time period or round. It is well-known that one-sided matching can be used to model various real-world situations such as matching patients to kidneys across health institutions  \cite{Durlauf:2008uq,Roth:2004ue}, assigning students to rooms in residential halls \cite{Durlauf:2008uq}, allocating workers to tasks in crowdsourcing  \cite{Difallah:2013uo,Aldhahri:2015vb}, and recommending users to activities in recommender systems  \cite{Satzger:2006tp,Ansari:2000uv,Isinkaye:2015vv}. 

In many of these situations, there are several obstacles. First, for a setting with reusable services, 
a major caveat is that an agent-alternative match within a round can result in 
\emph{blocking} of some services in which 
the services may not be available until a later time. 
For example, a recommended activity (e.g., a special promotion offer from a restaurant) that is matched to (or used by) a user may not be available to all users again until a later time. Or, in cloud computing, where tasks are matched to resources (e.g. GPUs), once a task is assigned to a resource, that resource is blocked for a certain number of rounds.  


Second, the agents are often unaware of their exact preferences, and the planner must coordinate their \emph{explorations} without incurring a significant loss. This is often true for recommending restaurants to customers, as the restaurants have limited capacity and people are rarely informed of all possible choices~\cite{Wald08}. Note that, even when the agents are themselves using learning algorithms over time, coordination by the planner becomes necessary to avoid different agents exploring the same service at a time -- a problem which is exacerbated by blocking of the services.

Finally, in several settings, the agents are aware of their preferences, but they might be \emph{strategic} in reporting their preferences for getting matched to better services. This is particularly prominent in assigning rooms to students. Rooms can be blocked due to occupancy and can be made available again once the students leave the rooms. As a result, there is a potential for the students to misreport their private preferences to manipulate matching outcomes.

\begin{table*}[!t]
\label{tab:1}
\setlength{\tabcolsep}{3pt}
\center
\caption{Lower and Upper Bound Results for Offline SBM Models\label{table:results_OSBM}}
\begin{tabular}{c|cc} \hline
 & Distortion & Incentive Ratio  \\ \hline
Any Deterministic Mechanism (\emph{lower bound}) & $\Omega(s)$ & $(0,1]$  \\  
Derandomized Repeated Random Serial Dictatorship (\emph{upper bound}) & $O(s)$ & $(1-1/e)$ \\ \hline
Any Randomised Mechanisms (\emph{lower bound}) & $\Omega(\sqrt{s})$ & $(0,1]$  \\  
Repeated Random Serial Dictatorship  (\emph{upper bound}) & $O(\sqrt{s})$ & $(1 - 1/e)$  \\ \hline
\end{tabular}
\end{table*}%

\subsection{Main Contributions}
In order to capture the notion of one-sided matching with blocking, we introduce a \emph{sequential blocked matching} (SBM) model, in which a set of $n$ strategic agents are matched to a set of $s$ services repeatedly over rounds 
 and where matched services are \emph{blocked} for a deterministic number of time steps. Each agent reports her ordinal preferences over the services every round, based on her current estimates of the expected rewards of the services. As is standard in the matching literature, we focus on the setting where agents just report ordinal preferences over the services. 
The planner's goal is to derive a matching policy that, at each round, 
elicits true preferences from the agents and matches them to services in order to maximize the expected social welfare, 
which is the sum of the expected utilities of the agents from the matchings over time, whilst accounting for the blocking of services.
To the best of our knowledge, SBM models have not been studied before and can be applied to a wide range of real-world matching scenarios. 


We 
investigate the offline and online variations of the SBM model. 
For both variations, we are interested in deriving  deterministic and randomized policies that are approximately truthful and efficient. 
We measure truthfulness by \emph{incentive ratio} \cite{IR}, which measures how much a single agent can gain via misreporting preferences. We measure efficiency through the notion of \emph{distortion} from social choice theory \cite{PR06}, which measures the loss in social welfare due to access to only preferences, and not utility functions and rewards. We formally define these concepts in Section \ref{sec:offline-blockmatch}.

\textbf{Offline SBM Benchmarks.} In the offline setting of SBM, each agent knows their own preferences and rewards over the services, but the planner does not. In addition, each agent reports their preferences  only once to the planner, before matching begins. Essentially, the offline benchmarks establish what we can achieve in terms of distortion if the agents' don't have to learn.  Table~\ref{table:results_OSBM} summarizes our results.
More specifically, we derive  lower bounds on the distortion of any deterministic and randomised mechanism. The main ingredient of our proof is the careful construction of reward profiles that are consistent with reported preferences that guarantees poor social welfare for the planner. 
We then focus on the upper bound and provide approximately truthful mechanisms with bounded incentive ratios that match the distortion lower bounds. In short, both the deterministic and randomised mechanisms we provide are based on the repeated random serial dictatorship (RSD) mechanism for one-shot one-sided matching problems. Our randomised mechanism, repeated RSD (\texttt{RRSD}), iterates randomly over all agents, greedily choosing the current agents' preferred service at each time step. Our deterministic mechanism, derandomised \texttt{RRSD} (\texttt{DRRSD}), is a derandomised version of this algorithm and matches the corresponding lower bound. Interestingly, we find that there is a strict separation of $\sqrt{s}$ between the achievable distortion by a deterministic and randomized mechanism. 


\textbf{Online SBM Algorithms.} 
For the online setting of SBM, the agents do not know their  preferences or rewards  
and must learn their preferences via repeated matching to services. After each matching, the agents update their preferences and strategically report them to the planner.  
We design an approximately truthful mechanism, bandit \texttt{RRSD} (\texttt{BRRSD}), based on the Explore-then-Commit (ETC) paradigm, which achieves sublinear dynamic approximate regret.
In particular, \texttt{BRRSD} has two phases. In the first phase, it allows the participating agents to learn their preferences via uniform allocation of services. Using the learnt estimates from this phase, the mechanism then runs \texttt{RRSD} in the second phase.

\subsection{Related Work}
We provide a brief discussion of the related work in the matching and bandit literature  
and highlight major differences comparing to our SBM models, which have not been considered previously. 




\textbf{Ordinal Matching and Distortion.}
We consider the objective of maximizing expected rewards as our offline benchmark. Since we do not observe the exact utilities of the agents rather ordinal preferences over items, we use the notion of \emph{distortion} \cite{PR06} from voting to quantify such a benchmark. In the context of voting, distortion measures the loss of performance due to limited availability of reward profiles \cite{BCHL+15, MPSW19, ABEP+18, Kempe20, AP17}. Our offline benchmark is related to the literature on the distortion of matching \cite{ABFV21, FFKZ14, AS16}. However, our offline benchmark needs to consider repeated matching over time, and because of the blocking of services, has a very different distortion than the distortion of a single-round matching.


\textbf{Online Matching.} 
There are existing online notions of weighted bipartite matching (e.g., \cite{Karp:1990un,Kalyanasundaram:1993wo,Karande:2011vz}) 
and stable matching (e.g., \cite{Khuller:1994vr})  
where the matching entities (i.e.\ agents or services) 
arrive dynamically over time and the corresponding information in the notions is publicly known 
(e.g., weights of the matched pairs or agents' ordinal preferences). 
These online settings are different from our repeated matching settings, where the entities 
do not arrive dynamically and our objective is to maximize expected rewards of the repeated matching 
given agents' ordinal preferences.  
Other recent works explore dynamic agent preferences that can change over time (e.g., \cite{Parkes:2013vs,Hosseini:2015tl,Hosseini:2015tc}). 
However, they do not consider the problem of maximizing expected rewards and blocking.  

\textbf{Blocking Bandits.} Our work in the online SBM models is closely related to the recent literature on blocking bandit models \cite{Basu:2021ue,Basu:2019ui,Bishop:2020un}, where each pulled arm (i.e., service)
can be blocked for a fixed number of rounds. Our work is also related to bandits with different types of arm-availability constraints \cite{NV14, KI18, KNS10}. 
However, these models do not consider the sequential matching setting where multiple strategic agents have (possibly unknown) ordinal preferences over arms  
and report ordinal preferences to a planner in order to be matched to some arm at each round. 



\textbf{Multi-agent multi-armed bandits.} The online setting in our work is broadly related to the growing literature on multi-agent multi-armed bandits \cite{LMJ20,SBS21,BBLB20}. \citet{LMJ20} consider a matching setting where strategic agents learn their preferences over time, and the planner outputs a matching every round based on their reported preferences. However, our setting is more challenging as we need to compete against a dynamic offline benchmark because of the blocking of services, whereas the existing works compete against a fixed benchmark e.g. repeated applications of Gale-Shapley matching in each round~\cite{LMJ20}.

\section{Preliminaries}
\label{sec:2}
 In this section, we introduce our model for sequential blocked matching. We start by describing how the preferences of each agent are modeled and describe formally how agents can be matched to services in a single time step. After which, we introduce the first of two sequential settings that we study in this paper, which features the blocking of services when they are assigned to agents.

    In our model, we have a set of agents, $N = \{1, \dots, n\}$, who hold \emph{cardinal} preferences over a set of services, $S = \{1, \dots, s \}$, where $s \gg n$ \footnote{Note that this is without loss of generality, as we may always add dummy services corresponding to a null assignment.}. We use $\mu_{i, j} \in \mathbb{R}^{+}$ to describe the cardinal reward agent $i$ receives for being assigned service $j$. Similarly, we denote by $\mu_{i} = (\mu)^{s}_{j=1}$ the vector of rewards associated with agent $i$. In what follows, we will also refer to $\mu_{i}$ as the \emph{reward profile} associated with agent $i$. Moreover, we restrict ourselves to reward profiles which lie in the probability simplex. That is, we assume $\mu_{i} \in \Delta^{s-1}$ for all $i \in N$. In other words, we make a  unit-sum assumption about the reward profile of each agent. Bounding constraints on reward profiles are common in the ordinal one-sided matching literature \cite{FFKZ14}, and are typically required in order to prove lower bounds for truthful algorithms such as RSD. Moreover, the unit-sum assumption is prevalent in social choice theory~\cite{BCHL+15}. Lastly, we denote by $\mu$ the $n$ by $s$ matrix of rewards.
    
    We say that agent $i$ (weakly) prefers service $a$ to service $b$ if agent $i$ receives greater reward by being assigned service $a$ over service $b$. That is, agent $i$ prefers service $a$ over service $b$ if and only if $\mu_{i, a} \geq \mu_{i, b}$. We use the standard notation $a \succ_{i} b$ to say that agent $i$ prefers service $a$ to service $b$. Additionally, we use the notation $\mathord{\succ}_{i}(j)$ to indicate the service in the $j$th position of the preference ordering $\succ_{i}$.  Note that every reward profile induces a linear preference ordering of services \footnote{One reward profile may induce many linear orderings. However, the linear preference profile induced by a reward profile can be made unique via tie-breaking rules.}. We use the notation $\mu_{i} \rhd \mathord{\succ_{i}}$ to denote that $\succ_{i}$ is a preference ordering induced by agent $i$'s reward profile. We let $\mathcal{P}(S)$, or $\mathcal{P}$ for short, denote the class of all linear preferences over $S$. We write $\succ_{i}$ to denote the preferences induced by agent $i$'s reward profile. Furthermore, we let $\succ = (\succ)_{i=1}^{n} \in  \mathcal{P}^{n}$ denote the \emph{preference profile} of the agents. As is standard, we write $\succ_{-i}$ to denote $(\succ_{1}, \dots, \succ_{i-1}, \succ_{i+1}, \dots, \succ_{n})$. As a result, we may denote $\succ$ by $(\succ_{i}, \succ_{-i})$.
    
    A \emph{matching} $m: N \to S\cup\{0\}$ is a mapping from agents to services. We let $m(i)$ denote the service allocated to agent $i$ by the matching $m$. We use $0$ to denote the null assignment. That is, agent $i$ is assigned no service in a matching if $m(i) = 0$. We let $\emptyset$ denote the null matching, in which no agent is assigned a service. We say matching is \emph{feasible} if no two agents are mapped to the same service. We let $\mathcal{M}$ denote the set of all feasible matchings. 
    
    In this paper, we consider discrete-time sequential decision problems, in which a planner  selects a sequence of (feasible) matchings over $T$ time steps. We let $m_{t}$ denote the matching chosen by the planner at time step $t$, and denote by $M = (m_{t})_{t=1}^{T}$ a sequence of $T$ matchings. We denote by $M(t, i) = m_{t}(i)$ the service matched to agent $i$ at time  $t$.
    
    Furthermore, we assume that, when a service is assigned, it may be blocked for a time period depending on the agent it was assigned to. More specifically, when agent $i$ is matched with service $j$, we assume that service $j$ cannot be matched to any agent for the next $D_{i, j} - 1$ time steps. We refer to $D_{i, j}$ as the \emph{blocking delay} associated with the agent-service pair $i$ and $j$. Additionally, we let $\tilde{D}$ denote the maximal blocking delay possible, and let $D$ denote the $n$ by $s$ matrix of blocking delays.
    
    From now on, we assume that all blocking delays are known a priori by both the planner and all agents. 
    We say that a matching sequence $M$ is \emph{feasible} with respect to the delay matrix $D$ if no service is matched to an agent on a time step where it has been blocked by a previous matching.
    \begin{defn}
    For a given blocking delay matrix $D$, the set of feasible matching sequences of length $T$, $\mathcal{M}^{D}_{T} \subseteq \mathcal{M}_{T}$, is the set of all matching sequences $M \in \mathcal{M}_{T}$ such that for all $t \in \{1, \dots, T\}$, $i \in N$, and $j \in S$, if $M(t, i) = j$ then $M(t^{\prime}, i^{\prime}) \neq j$ for all $i^{\prime} \in N$ and for all $t^{\prime}$ such that $t < t^{\prime} \leq t + D_{i, j} - 1$.
    \end{defn}
    In other words, we say that a matching sequence is feasible if there is no matching in the sequence which assigns an agent a service which has been blocked by a previous matching. Note that blocking of services is a common phenomenon in real-world scenarios. For example, consider a setting in which each service corresponds to a freelance contractor, and each agent corresponds to an employer. The matching of services and agents then corresponds to employers contracting freelancers. For the duration of the contract, which may differ from employer to employer, the matched freelancer is unavailable before returning to the pool of available services once their contract ends.
    
    We define the \emph{utility}, $\textrm{W}_{i}(M, \mu_{i})$, agent $i$ receives from a matching sequence $M$ as the sum of rewards it receives from each matching in the sequence. That is, $\textrm{W}_{i}(M, \mu_{i}) = \sum^{T}_{t=1}\mu_{i, M(t, i)}$. Similarly, we define the \emph{social welfare}, $\textrm{SW}(M, \mu)$, of a matching sequence $M$ as the summation of the utilities for all agents. More specifically, $\textrm{SW}(M, \mu) = \sum^{n}_{i=1}W_{i}(M, \mu_{i})$.  
    
    Next, we will describe the first sequential matching setting we consider in this paper, which we call the offline SBM setting. In this setting, the planner must produce a feasible matching sequence of length $T$. Prior to the selection of a matching sequence, each agent submits a linear preference ordering to the planner. We denote by $\tilde{\succ}_{i}$ the preference ordering, or \emph{report}, submitted by agent $i$. Analogously, we define $\tilde{\succ}$ as the preference profile submitted cumulatively by the agents, and call it the \emph{report profile}. A \emph{matching policy} $\pi(M\: |\: \tilde{\succ}, D)$ assigns a probability of returning a matching sequence $M$ given a submitted report profile $\tilde{\succ}$ and blocking delay matrix $D$. When it is clear from context, we will abuse notation and use $\pi(\tilde{\succ}, D)$ to refer to the (random) matching sequence prescribed by a policy $\pi$ given a report profile $\tilde{\succ}$ and blocking delay matrix $D$. 
    
    We say that a matching policy is \emph{admissible}, if for all possible report profiles and blocking delay matrices, the matching sequence returned by the policy is always feasible. The goal of the planner is to adopt an admissible matching policy which achieves high social welfare in expectation relative to the best feasible matching sequence in hindsight, $M^{*}(\mu, D) = \argmax_{M \in \mathcal{M}_{T}^{D}} \textrm{SW}(M, \mu)$.
    
    We assume that each agent, with full knowledge of the matching policy employed the planner, submits a linear preference ordering with the intention of maximising their own utility, and therefore may try to manipulate the planner by submitting a preference ordering which is not induced by their underlying cardinal preferences. We say that an agent is \emph{truthful} if they submit a preference ordering induced by their underlying cardinal preferences. That is, an agent is truthful if $\mu_{i} \rhd \tilde{\succ}_{i}$.  We denote by $\succ_{i}^{*}$ the report by agent $i$ which maximises agent $i$'s utility in expectation under the assumption that all other agents are truthful. We say that a policy is \emph{truthful} if for all possible $\mu$ and $D$ it is optimal for each agent to be truthful if all other agents are truthful. 
    In other words, a policy is truthful if for all $\mu$ and $D$ we have that $\mu_{i} \rhd \mathord{\succ}^{*}_{i}$ for all $i \in N$.
    
    To evaluate the efficiency of a given policy we use distortion, a standard notion of approximation for settings with ordinal preferences. 
    
    \begin{defn}
        The distortion of a matching policy is the worst-case ratio between the expected social welfare of the matching sequence, $\pi(\succ, D)$, returned by the policy under the assumption that all agents are truthful, and the social welfare of the optimal matching sequence, $M^{*}(\mu, D)$:
        \begin{equation*}
            \sup_{\mu, D}\frac{\textrm{SW}(M^{*}(\mu, D), \mu)}{\mathbb{E}\left[\textrm{SW}(\pi(\succ, D), \mu)\right]}
        \end{equation*}
    \end{defn}

    Note that distortion is the approximation ratio of the policy $\pi$ with respect to best matching sequence. In addition, note that the distortion is only a useful measure of a matching policies efficiency if said policy encourages truthful reporting. For example, for truthful policies, distortion is completely characterising of a policy's expected performance. As a result, we not only seek policies which have low distortion, but also policies which incentivise agents to submit their reports truthfully.
    
    To this end, we introduce the notion of incentive ratio, which measures the relative improvement in utility an agent can achieve by lying about their preferences.
    
    \begin{defn}
        The incentive ratio $\zeta(\pi) \in \mathbb{R}_{+}$ of a matching policy $\pi$ is given by:
        \begin{equation*}
            \zeta(\pi) = \max_{D, \mathord{\succ}_{-i},\: \mu_{i} \rhd \mathord{\succ}_{i}} \frac{\mathbb{E}[\textrm{W}_{i}(\pi((\succ_{i}, \succ_{-i}), D), \mu_{i})]}{\mathbb{E}[\textrm{W}_{i}(\pi((\succ^{*}_{i}, \succ_{-i}), D), \mu_{i})]}
        \end{equation*}
    \end{defn}
    

If a policy has an incentive ratio of $1$, then it is truthful. There are many reasons that we may expect a policy with bounded incentive ratio to do well. A bounded incentive ratio implies truth telling is a good approximation to the optimal report. If computing the optimal report is computationally intractable for the agent, being truthful is therefore an attractive alternative, especially if the approximation ratio implied by the incentive ratio is tight. In summary, we seek matching policies with good guarantees when it comes to both incentive ratio and distortion. This topic is treated in detail in the forthcoming sections.

\section{The Offline SBM Setting}\label{sec:offline-blockmatch}

In this section, we present our analysis of the offline SBM setting. We first provide a lower bound on the distortion achievable by both randomised and deterministic policies. Then, we discuss why trivial extensions of truthful one-shot matching algorithms do not result in truthful policies. Instead, we focus on designing policies which use truthful one-shot matching mechanisms as a basis, and have bounded incentive ratio. More precisely, we present the \texttt{RRSD} algorithm. We show that the incentive ratio of \texttt{RRSD} is bounded below by $ 1 - 1/e$, and provide upper bounds on the distortion achieved by \texttt{RRSD}, which match our previously established lower bounds on the best distortion achievable by any randomised algorithm. 

\subsection{Lower Bounds on the Distortion of Deterministic and Randomised Policies}
First, we prove that the distortion of any deterministic policy is $\Omega(s)$. That is, the distortion of any deterministic policy scales linearly with the number of services in the best case. In the proof, we first carefully construct a set of ordinal preferences. Then, given any matching sequence $M$, we show that there exists a set of reward profiles which induces the aforementioned ordinal preferences and on which $M$ incurs distortion of order $\Omega(s)$.\footnote{All missing proofs are deferred to the full version}


\begin{thm}
\label{thm: distortion of deterministic policies}
    The distortion of any deterministic policy is $\Omega(s)$.
\end{thm}

Next, we prove that the distortion incurred by any randomised policy is $\Omega(\sqrt{s})$. To prove this, we first show that it is sufficient to consider only \emph{anonymous} policies. That is, policies that assign each service to each agent the same number of times in expectation for all possible preference profiles. Then, we construct a set of reward profiles which yields the desired distortion for all anonymous truthful policies.


\begin{thm}
\label{thm:sqrt distortion}
    The distortion of the best randomised policy is $\Omega(\sqrt{s})$.
\end{thm}

\subsection{Constructing Truthful Algorithms for the Offline SBM Setting}
\label{sec:explain}
As previously mentioned, we assume that agents submit reports with the intention of maximising their own utility. As a result, the distortion incurred by a policy may not reflect its performance in practice, as agents may be incentivised to misreport their preferences in order to increase their utility. Note that in standard one-shot one-sided matching problems, this issue is sidestepped via the employment of truthful policies, like RSD. In addition, the restriction to considering truthful policies is well justified by the revelation principle. In a similar way, we would like to develop truthful algorithms for the offline SBM setting.

One may be tempted to apply such truthful one-shot policies to our setting directly. That is, to apply an algorithm such as RSD repeatedly on every time step in sequence in order to devise a matching sequence. This intuition is correct when there is no blocking, as the matching problems for each time step are then independent of each other. However, with blocking, the matchings from previous time steps will have a substantial effect on the set of matchings which preserve the feasibility of the matching sequence in future rounds. As a result, immediately obvious approaches, such as matching according to RSD repeatedly, do not result in truthful policies.

One simple way of generating truthful policies is to run a truthful one-shot one-sided matching policy once every $\tilde{D}$ time steps and simply return the empty matching in the remaining time steps. Such an approach decouples each time step from the next, resulting in truthfulness, but comes at the cost of only matching in at most $\lceil T/\tilde{D}\rceil$ rounds.

Instead, we construct an algorithm for the offline SBM setting from truthful one-shot matching algorithms in a different manner. More specifically, we propose the repeated random serial dictatorship (\texttt{RRSD}) algorithm, which uses RSD as a basis. Whilst \texttt{RRSD} is not truthful, it does have bounded incentive ratio.

\subsection{A Greedy Algorithm for the Offline SBM Setting}
The \texttt{RRSD} algorithm slowly builds up a matching sequence $M$ over time by iterating through agents and services. In other words, \texttt{RRSD} begins with the empty matching sequence, where $M(t, i) = 0$ for all $t$ and $i \in N$. To begin, \texttt{RRSD} samples a permutation of agents $\sigma$ uniformly at random. Next, \texttt{RRSD} iterates through the agents in the order given by the permutation sampled. For each agent $i$, \texttt{RRSD} iterates through services in the order specified by the preference ordering $\tilde{\succ}_{i}$ reported by agent $i$. For a given service $j$, \texttt{RRSD} repeatedly assigns service $j$ to agent $i$ at the earliest time step which does not cause the matching sequence to become infeasible. When no such time step exists, \texttt{RRSD}  moves onto the next service in agent $i$'s preference ordering. Once \text{RRSD} has iterated through the entire preference ordering of agent $i$, \texttt{RRSD} moves onto the next agent in the permutation $\sigma$ and repeats this process until the end of the permutation is reached.  The pseudocode for \texttt{RRSD} is given in the full version.

We will now briefly give the intuition behind \texttt{RRSD}. In essence, \texttt{RRSD} attempts to mimic the RSD algorithm for one-shot matching problems by allowing each agent to sequentially choose a feasible assignment of services over the entire time horizon (whilst respecting the assignments chosen by previous agents) via its reported ordering. In the case of RSD, given an agent's preference ordering, the same assignment is always optimal no matter the underlying the reward profile of the agent. That is, it is optimal for the agent to be assigned its most preferred available service, no matter its cardinal preferences. As a result, RSD is trivially truthful in the one-shot matching setting. In contrast, in the offline SBM setting, the optimal assignment of services can be different for two reward profiles which induce the same preference ordering. Hence, there is no trivial assignment, based on the preference ordering submitted by the agent which guarantees that agents are truthful.

Instead, given an agent's preference ordering, we attempt to find an assignment which performs reasonably well, no matter the underlying reward profile of the agent. \texttt{RRSD} uses a greedy algorithm to compute the assignment given to an agent. As long as this greedy algorithm is a good approximation of the optimal assignment, no matter the agent's underlying reward profile, then \texttt{RRSD} will have a bounded incentive ratio. The next theorem formalises this argument.

\begin{thm}
\label{thm: incentive ratio of RRSD}
    The incentive ratio of \texttt{RRSD} is asymptotically bounded below by $1 - 1/e$.
\end{thm}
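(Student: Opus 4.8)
The plan is to show that, for every SBM instance, the expected utility an agent gets by reporting truthfully under \texttt{RRSD} is at least $(1-1/e)$ times the largest expected utility it could obtain by any misreport as $T\to\infty$; this is exactly the claimed bound on $\zeta(\texttt{RRSD})$. The starting point is a decoupling observation. The only randomness in \texttt{RRSD} is the uniform random order $\sigma$ in which agents are processed, and \texttt{RRSD} uses each agent's \emph{first} reported ordering only; moreover the services assigned to agent $i$ are determined by $o_{i,1}$ together with the partial matching already built when $i$ is drawn, which depends on the other agents only through $\sigma$ and their (fixed) reports $O^*_{-i}$. Hence I would condition on $\sigma$: the assignments of all agents drawn before $i$ are then fully determined, so they fix, for every time step $t$, a set $\mathcal{A}_t\subseteq\mathcal{S}$ of services not blocked by earlier agents at $t$. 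Agent $i$'s part of the outcome is now governed by a single-agent problem: report one ordering $o$, and at each step be greedily assigned the top service of $o$ lying in $\mathcal{A}_t$ and not currently self-blocked.

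Within this single-agent problem I would identify the two sides of the comparison. Since $o^*_i$ is consistent with $\mu_i$, reporting truthfully makes the greedy assignment pick, at every step, the available service of maximal mean reward — precisely the reward-ordered greedy policy for a single-agent blocking problem with time-varying availability $\{\mathcal{A}_t\}_t$. Conversely, any report $o$ yields \emph{some} feasible assignment for this blocking problem, so its value is at most the optimal feasible value $\mathrm{OPT}_i(\sigma)$; therefore $\max_o W(o,\sigma)\le\mathrm{OPT}_i(\sigma)$, and it suffices to compare truthful greedy against $\mathrm{OPT}_i(\sigma)$.

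The quantitative heart of the argument is then that greedy recovers at least a $(1-1/e)$ fraction of $\mathrm{OPT}_i(\sigma)$, up to an $o(T)$ additive term. This is the blocking-bandit guarantee for the reward-ordered greedy oracle \cite{Basu:2019ui,Bishop:2020un}; the one new ingredient is the externally blocked slots $\mathcal{S}\setminus\mathcal{A}_t$ produced by the $k-1\le N-1$ agents processed before $i$, which I would absorb into the standard charging/LP argument by restricting ``available'' to $\mathcal{A}_t$ throughout. Both greedy and the optimum see the same restriction, and the core step — that at each time step greedy's reward dominates the reward of any service the optimum uses at that step but that greedy has not recently claimed — is unaffected. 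The $o(T)$ slack comes only from boundary effects over the first and last $\tilde D$ steps, which wash out after dividing by $T$.

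Finally I would undo the conditioning. The inequality $W_i(\texttt{RRSD}(O^*_i,O^*_{-i});\sigma)\ge (1-1/e)\,W(o,\sigma)-o(T)$ holds for every realization $\sigma$ and every report $o$; taking expectations over $\sigma$ gives $\mathbb{E}[W_i(\texttt{RRSD}(O^*_i,O^*_{-i}))]\ge (1-1/e)\,\mathbb{E}[W_i(\texttt{RRSD}(O_i,O^*_{-i}))]-o(T)$ for every deviation $O_i$, hence also for the maximizing one. Dividing through and letting $T\to\infty$ yields $\zeta(\texttt{RRSD})\ge 1-1/e$. I expect the main obstacle to be this quantitative step: porting the $(1-1/e)$ greedy-versus-optimal bound for blocking bandits to the setting with adversarially pre-blocked time slots, and checking that the additive error is genuinely $o(T)$ uniformly over $\sigma$ and over the instance, so that it disappears in the asymptotic ratio.
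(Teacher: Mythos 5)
Your high-level architecture is exactly the paper's: condition on the random order, observe that once the earlier agents' assignments are fixed the chosen agent faces a single-agent blocking problem with externally pre-blocked slots, note that truthful reporting makes \texttt{RRSD}'s greedy pass coincide with reward-ordered greedy while any misreport yields some feasible solution (hence is dominated by the single-agent optimum), and then invoke a $(1-1/e)$ greedy-versus-optimal bound from the blocking-bandits literature. So far this matches the paper step for step.

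The issue is that the step you defer --- porting the $(1-1/e)$ bound to the setting with external blocking --- is where essentially all of the paper's proof lives, and the mechanism you sketch for it is not the one that actually delivers the constant. You describe the core step as a per-time-step exchange (``greedy's reward dominates the reward of any service the optimum uses at that step but that greedy has not recently claimed''); that style of charging argument is not how the $1-1/e$ guarantee for greedy blocking bandits is obtained, and it is not obviously true here because greedy and the optimum need not be choosing from the same residual availability at a given step. What the paper does instead is: (a) upper-bound the deviating agent's ILP by a sequence of relaxations ending in a \emph{fractional bounded knapsack}, in which the external blocking enters through a quantity $b_s$ (the number of periodic slots for service $s$ killed by the earlier agents' allocations --- crucially $b_s\neq|C_s|$ in general, it depends on the \emph{pattern} of blocked slots), yielding effective periods $D'_{k,s}$ with $1/D'_{k,s}=1/D_{k,s}-b_s/T$ and optimum $n^*_s\le T/D'_{k,s}+1$; (b) lower-bound greedy by an induction showing $\sum_{j\le s}n^g_j\ge\sum_{j\le s}n'_j$ for idealized counts $n'_s=\frac{T}{D'_{k,s}}\prod_{j<s}(1-1/D'_{k,j})$; and (c) run the ratio optimization over the $D'_{k,s}$ from Basu et al.\ to extract $1-1/e$ asymptotically. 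Your claim that ``both greedy and the optimum see the same restriction'' glosses over exactly the pattern-dependence that forces the introduction of $b_s$ and the effective periods; without some version of that bookkeeping the comparison between greedy and the relaxed optimum does not go through. So the plan is sound and correctly structured, but the quantitative heart needs the knapsack/induction machinery rather than the charging argument you propose.
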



\noindent
\textbf{Remark}. It is an open question as to whether we can achieve incentive ratios better than $1- 1/e$ when \texttt{RRSD} is used. In particular, one can show that many scheduling problems such as generic job interval scheduling and (dense) pinwheel scheduling can be reduced to the optimal manipulation problem each agent faces in \texttt{RRSD}. Whilst it is known that generic job interval scheduling problems are MAXSNP-hard~\cite{chuzhoy2006approximation}, it is still not known whether there exists a scheduling algorithm with approximation ratio better than $1 - 1/e$. 

We now provide an upper bound on the distortion achieved by \texttt{RRSD}, which matches our previously established lower bound for randomised policies described in Theorem \ref{thm:sqrt distortion}.

\begin{thm}
\label{thm:distortion of RRSD}
    The distortion of \texttt{RRSD} is at most $O(\sqrt{s})$.
\end{thm}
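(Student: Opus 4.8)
The plan is to sandwich the two welfares: for the optimum, relax the joint matching to $N$ independent single‑agent blocking‑bandit problems; for \texttt{RRSD}, lower‑bound the expected welfare by conditioning on which agent is sampled first. This yields a distortion of $O(N)$, which is $O(\sqrt{S})$ under the standing assumption $S \gg N$ (so that $\sqrt{S} = \Omega(N)$; recall we may pad with zero‑reward, non‑blocking services, which leaves every instance's distortion unchanged while only enlarging $S$).

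\textbf{Bounding the optimum.} Fix an instance $(\mu, D)$ and an optimal feasible sequence $M^{*}_{T}$. For each agent $i$, the restriction $(m^{*}_{1}(i), \dots, m^{*}_{T}(i))$ is feasible for $i$'s own blocking delays: a match of $i$ to $j$ at time $t$ blocks $j$ for $D_{i,j}-1$ steps for every agent, in particular for $i$, so consecutive uses of $j$ by $i$ are at least $D_{i,j}$ apart. Hence it is a feasible play of the offline blocking‑bandit instance with arms $\mathcal{S}$, rewards $\mu_{i,\cdot}$, and delays $D_{i,\cdot}$, so $W_{i}(M^{*}_{T}) \le W^{\mathrm{BB}}_{i}$, the optimal value of that instance. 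Summing, $\textrm{SW}(M^{*}_{T}) \le \sum_{i} W^{\mathrm{BB}}_{i}$.

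\textbf{Bounding \texttt{RRSD}.} Let $E_{i}$ be the event that $i$ is the first agent \texttt{RRSD} samples, so $\Pr[E_{i}] = 1/N$. On $E_{i}$ the matching sequence is still empty when \texttt{RRSD} processes $i$, so the sequence it assigns to $i$ is exactly the output of greedily picking, at each time step, $i$'s best service not blocked by $i$'s own earlier picks — i.e. the greedy solution to $i$'s (unconstrained) offline blocking‑bandit instance. By the approximation argument behind Theorem~\ref{thm: incentive ratio of RRSD}, this greedy solution has value at least $(1 - 1/e - o(1))\, W^{\mathrm{BB}}_{i}$ as $T \to \infty$. On $\neg E_{i}$ we use only $W_{i}(\texttt{RRSD}) \ge 0$. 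Hence $\mathbb{E}[W_{i}(\texttt{RRSD})] \ge \frac{1 - 1/e - o(1)}{N}\, W^{\mathrm{BB}}_{i}$, and summing over $i$ and invoking the previous paragraph,
\[
  \mathbb{E}[\textrm{SW}(\texttt{RRSD})] \;\ge\; \frac{1 - 1/e - o(1)}{N}\sum_{i} W^{\mathrm{BB}}_{i} \;\ge\; \frac{1 - 1/e - o(1)}{N}\,\textrm{SW}(M^{*}_{T}),
\]
so the distortion of \texttt{RRSD} is at most $\frac{N}{1 - 1/e - o(1)} = O(N) = O(\sqrt{S})$.

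\textbf{Main obstacle.} The step that loses the full factor $N$ is conditioning on one specific agent being first, which is exactly why the bound reads $O(\sqrt{S})$ only through $S \gg N$; making this quantitative, and uniform in $T$ for the $o(1)$ greedy term, is the only real care needed above. What would be genuinely harder is to prove a bound tight in the number of ``heavy'' services (those with $\mu_{i,j} \ge 1/\sqrt{S}$, of which each agent has at most $\sqrt{S}$ by the unit‑sum assumption) \emph{without} assuming $S \gg N$: there $\sum_{i} W^{\mathrm{BB}}_{i}$ is too crude an upper bound on the optimum — when many agents value the same blocked service it overcounts $\textrm{SW}(M^{*}_{T})$ by a factor $N$ — and one would instead need an item‑by‑item charging argument accounting for the time‑varying blocking, in the spirit of the $\Theta(\sqrt{S})$ distortion analysis of Random Priority for one‑sided matching; reconciling such an argument with the blocking delays is the delicate point. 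As a consistency check, the $\Omega(\sqrt{S})$ lower bound of Theorem~\ref{thm:sqrt distortion} is then necessarily witnessed by an instance with $N = \Theta(\sqrt{S})$, matching the $\Theta(N)$ behaviour obtained here.
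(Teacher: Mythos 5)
Your argument is internally consistent up to the point where you convert $O(N)$ into $O(\sqrt{S})$, and that conversion is where the proof breaks. What you actually establish is a distortion bound of $O(N)$: the optimum is at most $\sum_i W^{\mathrm{BB}}_i$, and \texttt{RRSD} recovers a $(1-1/e)/N$ fraction of that sum by conditioning on each agent being drawn first and discarding its welfare otherwise. But $O(N)$ is not $O(\sqrt{S})$ on the instances the theorem must cover: the matching lower bound of Theorem~\ref{thm:sqrt distortion} is proved on instances with $N=S$ (not $N=\Theta(\sqrt{S})$, as you assert in your consistency check), and there your bound reads $O(S)$ --- a full $\sqrt{S}$ factor worse than claimed and no better than the deterministic guarantee of Theorem~\ref{thm: distortion of DRRDS}. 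The padding manoeuvre cannot rescue this: adding zero-reward services enlarges the $S$ appearing in the bound for the \emph{padded} instance, whereas the theorem is a statement about the original instance's $S$; one cannot strengthen a bound on a fixed instance by proving a weaker statement about a different one. The standing assumption $S \gg N$ gives $N = o(S)$, not $N = O(\sqrt{S})$.

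The paper avoids the factor-$N$ loss by a different reduction. It defines the rescaled profile $\tilde{\mu}_{i,j} = \mu_{i,j}/D_{i,j}$ and shows two things: first, $\textrm{SW}(M^{*}_{T}) \le T \cdot \textrm{SW}_0(\sigma^{\star}, \tilde{\mu})$ for some one-shot matching $\sigma^{\star}$, by stretching each allocation of $j$ to $i$ over its $D_{i,j}$ blocked rounds; second, $\mathbb{E}[\textrm{SW}(\texttt{RRSD}, \mu)] = T\,\mathbb{E}[\textrm{SW}_0(\sigma, \tilde{\mu})]$ where $\sigma$ is one-shot \texttt{RSD}, since conditioned on the serial order each agent receives its chosen item once every $D_{i,j}$ rounds with the same selection probabilities as one-shot \texttt{RSD}. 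The distortion of \texttt{RRSD} is therefore at most the one-shot distortion of \texttt{RSD}, which is $O(\sqrt{S})$ by Lemma~4 of \cite{FFKZ14}. The essential difference is that the one-shot analysis does not throw away all but one agent's welfare; it exploits the unit-sum structure across all agents simultaneously, which is precisely the ``item-by-item charging argument'' you correctly identify as the missing hard step in your own approach.
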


Finally, we show that it is possible to match the previously established lower bound for the distortion of deterministic algorithms. More specifically, we show that a derandomised version of \texttt{RRSD} incurs a distortion of at most $O(s)$. The main idea is that we can select $O(n^{2}\log{n})$ permutations of agents so that the number of times an agent $i$ is selected in the $j$th position is $\Theta(n\log{n})$. We can then run through these permutations one by one instead of selecting one permutation uniformly at random as in the \texttt{RRSD} algorithm. 

\begin{thm}
    \label{thm:det-upper}
    There is an admissible deterministic policy with distortion at most $O(s)$ for any $T \geq O(n^{2}\log(n))$
\end{thm}


\section{SBM with Bandit Feedback}
\label{sec:bandit}
Note that, in order for the guarantees above to hold in practice, we must assume that agents are fully aware of their ordinal preferences before matching begins. However, in many real-world scenarios, agents may be initially unaware of their preferences and learn them over time by matching with services. In addition, the reward an agent receives for being matched with a service may be inherently stochastic, depending on unobservable aspects of the underlying environment. With these concerns in mind we present a new sequential blocked matching setting, which we call the online SBM setting with bandit feedback, or online SBM for short.

In the online SBM setting, matching occurs time step by time step. At the beginning of each time step, every agent must submit a report, $\tilde{\succ}^{t}_{i}$, to the planner. The planner is then tasked with returning a matching of agents to services which obeys the blocking constraints imposed by the matchings from previous time steps. At the end of each time step, agent $i$ receives a reward, $r_{i,t} \in [0, 1]$, sampled from a distribution with mean $\mu_{i,j}$, where $j$ is the service agent $i$ was assigned in the matching returned by the planner. Additionally, we assume that each agent maintains an internal estimation, $\succ^{t}_{i}$, of its own preference ordering at every time step, based on the rewards received thus far.

We use $H^{\mathord{\succ}}_{t} = (\tilde{\mathord{\succ}}^{1}, \dots, \tilde{\mathord{\succ}}^{t})$ to denote the \emph{report history} up to time step $t$.
Furthermore, we use $H^{m}_{t} = (m_{1}, \dots, m_{t})$ to describe the \emph{matching history} at the end of time step $t$. We say that a matching history is \emph{feasible} if its matchings form a feasible matching sequence. Similarly, we use $H_{t}^{r} =(r_{1}, \dots, r_{t})$ to denote the \emph{reward history}. That, is the tuple of reward vectors, $r_{t}$, observed by the agents at every time step. An \emph{(online) matching policy} $\pi = (\pi_{1}, \dots, \pi_{T})$ is a tuple of functions $\pi_{t}(m | \tilde{H}^{\succ}_{t}, H^{m}_{t}, D)$ which assigns a probability of returning the matching $m$ given a report history $H^{\succ}_{t}$, a feasible matching history $H^{m}_{t}$ and a blocking delay matrix $D$. Similarly to the offline setting, we say that a matching policy is \emph{admissible} if it always returns a feasible matching sequence. 

Likewise, an \emph{(online) report policy} for agent $i$, $\tilde{\psi}_{i} = (\tilde{\psi}_{1}, \dots \tilde{\psi}_{t})$, is a tuple of functions $\tilde{\psi}_{t}(\tilde{\succ}_{i}^{t} | H^{r}_{t}, H^{m}_{t}, D)$ which assign a probability of agent $i$ reporting $\tilde{\succ}_{i}^{t}$ at time step $t$ given a reward history $H^{r}_{t}$, a matching history $H^{m}_{t}$, and blocking delay matrix $D$. We denote by $\tilde{\psi} = (\tilde{\psi}_{1}, \dots, \dots \tilde{\psi}_{n})$ the tuple of report policies used by the agents.  As before we use the notation $\tilde{\psi}_{-i}$ to denote the report policies of all agents bar agent $i$ and use $\psi$ to denote the tuple of report policies where each agent reports its internal estimation $\succ^{t}_{i}$ at every time step. We say that an agent is \emph{truthful} if it employs the report policy $\psi_{i}$.

The goal of each agent is to employ a report policy that maximises the sum of their own rewards across the time horizon. In contrast, goal of the planner is to employ a matching policy which maximises the sum of rewards across all agents and across all time steps.

In the bandit literature, a performance metric that is typically used to measure the efficiency of a policy is regret, which is defined as the expected difference between the rewards accumulated by a matching policy, and the expected reward accumulated by the best fixed matching policy in hindsight. That is, the best policy which repeatedly selects the same matching in as many time steps as possible. Such a benchmark policy may have very poor performance relative to the optimal matching sequence in expectation, and as such, the classical notion of regret is an unsuitable performance measure in the online SBM setting. To resolve this issue, we propose the following regret definition:
\begin{defn}
    The dynamic $\alpha$-regret of a policy $\pi$ is:
    \begin{equation*}
        R^{\alpha}_{\pi}(D, \mu, T) = \alpha\textrm{SW}(M^{*}, \mu) - \mathbb{E}_{\psi, \pi}\left[\sum^{n}_{i=1}\sum^{T}_{t=1}r_{i, t}\right] 
    \end{equation*}
\end{defn}
In other words, we compare the expected performance of a matching policy against a dynamic oracle which returns an  $1/\alpha$-optimal solution to the corresponding offline SBM problem, under the assumption that agents truthfully report their internal estimation of their preferences at each time step. Recall that, in the offline SBM setting, the distortion incurred by any policy is at least $\Omega(s)$. As a result, we cannot expect to construct algorithms with vanishing $1/\alpha$-regret for $\alpha < \sqrt{s}$.
In addition, one would not expect any matching policy to have low dynamic regret if the internal estimations computed by each agent are inaccurate. For example, if any agent's internal estimator consists of returning a random preference ordering, then we cannot hope to learn about said agent's preferences. As a result, we need to make reasonable assumptions regarding the internal estimator of each agent.

Similar to distortion for the offline SBM setting, dynamic $\alpha$-regret is only a meaningful performance measure for policies which motivate agents to adopt truthful reporting policies. Inspired by the concept of incentive ratio for the offline SBM setting, we define a new notion of regret which, given a matching policy $\pi$ captures the expected gain in cumulative reward an agent can achieve by misreporting.

\begin{defn}
    For a given matching policy $\pi$, we define agent $i$'s $\alpha$--IC regret (or $\alpha$ incentive compatible regret) as follows:
    \begin{equation*}
        \text{I}^{\alpha}_{\pi}(D, \mu, T) = \alpha \max_{\tilde{\psi}} \mathbb{E}_{\left(\psi_{-i}, \tilde{\psi}_{i}\right), \pi}\left[\sum^{T}_{t=1}r_{i, t}\right] -  \mathbb{E}_{\psi, \pi}\left[\sum^{T}_{t=1}r_{i, t}\right]
    \end{equation*}
\end{defn}
Note that for some matching policies, computing the optimal reporting policy may be computational intractable. If agents have vanishing $\alpha$-IC regret for a such  policy, then adopting a truthtelling  forms a good approximation of each agent's optimal reporting policy. If this approximation is better than what can be computed by the agent, then we can expect each agent to adopt their truthful reporting policy. Thus, we seek matching policies with good guarantees with respect to both dynamic $\alpha$-regret and $\alpha$-IC regret.

\subsection{Algorithms for Online SBM}
Next, we present a matching policy which achieves meaningful guarantees with respect to both dynamic $\alpha$-regret and $\alpha$-IC regret. More precisely, we present the bandit repeated random serial dictatorship (\texttt{BRRSD}) algorithm. Before we describe \texttt{BRRSD} formally, we first state our assumptions regarding the internal estimator used by each agent. 

Let $\hat{\mu}_{i, j}$ denote the empirical mean of the reward samples agent $i$ receives from being assigned service $j$. We say that an agent $i$ is \emph{mean-based} if service $a$ is preferred to service $b$ in $\succ^{t}_{i}$ if and only if $\hat{\mu}_{i, a} \geq \hat{\mu}_{i, b}$. That is, a mean-based agent  prefers services with higher empirical mean reward. From hereon, we assume that all agents are mean-based. 

Additionally, we use $\Delta_{\text{min}}$ to denote the smallest gap in mean rewards between two services for the same agent. That is, $\Delta_{\text{min}} = \min_{i,a\neq b}\left|\mu_{i,a} - \mu_{i, b}\right|$. Note that $\Delta_\text{min}$ is analogous to common complexity measures used in bandit exploration problems. Intuitively, if the mean rewards received from being assigned two services are similar, it will take more samples for a mean-based agent to decide which service they prefer. 

We are now ready to describe \texttt{BRRSD}. \texttt{BRRSD} is split into two phases. In the first phase, \texttt{BRRSD} assigns each agent each service exactly
$\left\lceil2\log(2Tsn)/ \Delta_{\text{min}}^{2}\right\rceil$ times. \texttt{BRRSD} performs these assignments in a greedy manner. At each time step, \texttt{BRRSD} iterates through the agent-service pairs that still need to be assigned in an arbitrary order. If an agent-service pair does not violate blocking constraints, then it is added to the current matching. Once this iteration is completed, or all agents have been assigned services, the matching is returned and \texttt{BRRSD} moves onto the next time step. Once all required assignments have been completed, \texttt{BRRSD} waits until all services are available, matching no agents to services in the meantime. Note that this takes a maximum of $\tilde{D}$ rounds. Then, \texttt{BRRSD} begins its second phase. At the beginning of the next time step, \texttt{BRRSD} observes the report profile $\tilde{\succ}_{i}^{t}$ and selects matchings according to \texttt{RRSD} using this report profile for the remainder of the time horizon. The full pseudocode for \texttt{BRRSD} is deferred to the full version.

\texttt{BRRSD} falls in the class of explore-then-commit (ETC) algorithms common in the bandit literature. The first phase of \texttt{BRRSD} serves as an exploration phase in which agent's learn their preference ordering. Meanwhile, the second phase of \texttt{BRRSD} serves as exploitation phase in which agents have the opportunity to disclose their accumulated knowledge to the planner in the form of ordinal preferences. Observe that this decoupling of exploration and exploitation avoids complicated incentive issues that may arise for sequential algorithms, which make no such clear separation. 

The exploration phase of \texttt{BRRSD} is simple relative to typical approaches in the bandit exploration literature. One may hope to apply a more complicated scheme for exploration, however approaches with better performance guarantees typically depend directly on the reward samples observed, which the planner does not access to.
The next theorem describes the  guarantees of \texttt{BRRSD}.

\begin{thm}
\label{thm:bandit}
Under the assumption that agents are mean-based, the following is true for all $\mu$ and $D$: \\
(i) The dynamic $(1/\sqrt{s})$-regret of \texttt{BRRSD} is $O\left(\tilde{D}\sqrt{s}\log\left(Tsn\right)/\Delta^{2}_{\text{min}}\right)$. \\
(ii) The $(1-1/e)$-IC regret for all agents under \texttt{BRRSD} is $O\left(\tilde{D}s\log\left(Tsn\right)/\Delta^{2}_{\text{min}}\right)$. \\
(iii) The greedy algorithm used by \texttt{BRRSD} in the exploration phase uses at most twice as many time steps as the shortest feasible matching sequence which completes the required assignments.
\end{thm}
\section{Conclusions and Future Work}
In this paper, we introduced the sequential blocking matching (SBM) model to capture repeated one-sided matching with blocking constraints. For the offline setting, we lower bounded the performance of both deterministic and randomised policies, presented algorithms with matching performance guarantees and bounded incentive ratio. Then, we analysed an online SBM setting, in which agents are initially unaware of their preferences and must learn them. For this setting, we presented an algorithm with sublinear regret with respect to an offline approximation oracle.

There are many interesting directions for future work. A natural generalisation would be to consider a two-sided matching setting \cite{Roth92} where services also hold preferences over agents. Additionally, our algorithms for both the offline and online settings are centralised. It is worth investigating whether similar performance guarantees can be achieved by a decentralised approach. Furthermore, we assumed that the preferences are static over time. It remains to be seen whether our approach generalises to settings where agents' preferences are dynamic and change over time \cite{BV19}.


\section*{Acknowledgements}
Nicholas Bishop was supported by the UK Engineering and Physical Sciences Research Council (EPSRC) Doctoral Training Partnership grant.

\printbibliography


\appendix

\section{Proofs}
\label{appendix - proofs of section 3}

\subsection{Proof of Theorem~\ref{thm: distortion of deterministic policies}}

\begin{proof}
    
    We now consider an instance of SBM with $n$ agents and $n$ services, where each agent has the same preferences. That is, $\mathord{\succ}_{a} = \mathord{\succ}_{b}$ for all $(a, b) \in N$. Furthermore, assume that, without loss of generality, service $j$ is in the $j$th position of this preference ordering. That is, assume that $\mathord{\succ_{a}}(j) = j$ for all $j \in S$. Lastly, assume that the blocking delay on each of the $n$ services is $\tilde{D}$ for all agents. In other words, put more formally, assume that $D_{ij} = \tilde{D}$ for all $i \in N$ and all $j \in S$.  
    
    We proceed with the proof in the following manner. Given the matching sequence $M$ returned by a deterministic policy using the above preference profile and blocking delay matrix, we will show that there exists a set of reward profiles which induce the preference profile, and on which the matching sequence $M$ suffers a distortion of $O(n) = O(s)$. We construct this reward profile via an inductive argument.
    
    
    Firstly, observe that there must exist some agent, $i_{1} \in N$, who is assigned service $1$ at most $T/\tilde{D}n$ times in the matching sequence $M$ by the pigeonhole principle. We set the reward profile of agent $i_{1}$ to $(1, 0, \dots, 0)$. Disregarding agent $i_{1}$, observe that there must exist different agent, $i_{2} \in N$, who is assigned service 1 or 2 at most $T/\tilde{D}(n-1)$ times, once again by the pigeonhole principle. We set the reward profile of agent $i_{2}$ to  $(1/2, 1/2, 0, \dots, 0)$. Disregarding both agents $i_{1}$ $i_{2}$, we can find a new agent, $i_{3} \in N$, who has been assigned services 1, 2 or 3 at most $T/\tilde{D}(n-2)$ times. we set the reward profile of agent $i_{3}$ to $ (1/3, 1/3, 1/3, 0 \dots, 0)$. We proceed in this pattern for a total of $n$ steps, until all agents are assigned reward profile. Note that the reward profiles induce the desired preference profile (assuming that a numeric tie-breaking rule is used).
    
    Given the assigned reward profiles, it is obvious that an optimal matching sequence assigns service $j$ to agent $i_{j}$ whenever the service is available. The social welfare of this optimal matching sequence is therefore of order $O(\log{(n)}T/\tilde{D})$. In contrast, the matching sequence $M$ has social welfare of order $O(\log{(n)}T/\tilde{D}n)$. As we can always construct such a reward profile, no matter the matching sequence $M$ returned by a policy, this implies that the distortion of any policy is of order $O(n) = O(s)$.
    
    
\end{proof}

\subsection{Proof of Theorem~\ref{thm:sqrt distortion}}

\begin{proof}
    Similarly to Theorem~\ref{thm: distortion of deterministic policies}, 
    we consider an instance of SBM with $n$ agents and $n$ services. Additionally, assume that the blocking delays for all services is the same for all agents. That is, $D_{ij} = d$ for some $d \leq \tilde{D}$. 
    
    Before moving to the content of the proof, we first show that it is sufficient to consider only \emph{anonymous} policies. Given a preference profile $\mathord{\succ}$, we let $A_{ij}(\mathord{\succ}) \in \set{0,1,\ldots,T}$ denote the random variable that indicates the number of times agent $i$ was allocated service $j$. We call a randomised matching algorithm anonymous if $\E[A_{ij}(\mathord{\succ}_1,\ldots,\mathord{\succ}_n)] = \E[A_{\sigma(i)j}(\mathord{\succ}_{\sigma(1)},\ldots, \mathord{\succ}_{\sigma(n)})]$ for all permutations $\sigma$. In other words, a matching policy is anonymous if each agent is assigned each policy the same number of times in expectation, regardless of the agents' relative positions in the preference profile.
    
    Now, suppose we are given a matching policy which has distortion at most $\rho$ i.e. $\sum_{ij} \mu_{ij} \E[A_{ij}(\mathord{\succ})] \ge \rho \opt(\mu)$. We can consider a new matching policy that selects a  permutation $\sigma$ uniformly at random and then applies the same policy on the input $\mathord{\succ}_\sigma = (\mathord{\succ}_{\sigma(1)}, \ldots, \mathord{\succ}_{\sigma(n)})$. Then the expected social welfare of the new policy is 
    $$
    \E_{\sigma}\left[\sum_{ij} \mu_{\sigma(i)j} A_{\sigma(i)j}(\mathord{\succ}_\sigma) \right] \ge \E_{\sigma}\left[ \rho \opt(\mu_\sigma)\right] = \rho \opt(\mu)
    $$
    The first inequality follows because the original policy gives $\rho$ distortion even when applied to the profile $\mu_\sigma$ and the second equality follows because the optimal welfare ($\opt(\mu) = \sum_{ij} \mu_{ij} A^*_{ij}$) is invariant to permutation. Therefore, the new anonymous policy has distortion at most $\rho$. This implies that for any matching policy, there is an anonymous matching policy with identical performance with respect to distortion. As a result, from now on, we restrict our consideration to anonymous matching policies without loss of generality.
    
    Next we will show that any anonymous matching policy incurs distortion of order $\Omega(s)$ via construction of a special set of reward profiles. The reward profiles we construct are very similar to the ones constructed in the proof of Lemma 8 of \cite{FFKZ14}. For each $i \in [\sqrt{n}]$, define 
    \begin{align*}
        \mu_{i, j} = \left\{\begin{array}{cc}
            1 - \sum_{j \neq i} \mu_{i, j} & \textrm{if } j = i\\
            \frac{n-j}{10 n^3 d} & \textrm{o.w.}
        \end{array} \right.
    \end{align*}
    And for each $\ell \in [\sqrt{n}-1]$, define
    \begin{align*}
        \mu_{i+\ell\sqrt{n}, j} = \left\{\begin{array}{cc}
             1 - \sum_{j \neq i} \mu_{i, j} & \textrm{if } j = i\\
             \frac{1}{\sqrt{n}} - \frac{j}{10n^2} & \textrm{if } j \neq i \And j \le \sqrt{n}\\
             \frac{n-j}{10 n^3 d} & \textrm{o.w.}
        \end{array}
        \right.
    \end{align*}
     The $n$ agents are grouped into $\sqrt{n}$ groups and all agents in group $i$ have the same preference order. Let $G_i = \set{i} \cup \set{i + \ell \sqrt{n}: \ell=1,\ldots,\sqrt{n}-1}$. Observe that all the agents in group $G_i$ have preference order $i \succ 1 \succ \ldots \succ i-1\succ i+1 \succ \ldots \succ n$. Therefore, for any service $j$, all the agents in group $G_i$ have  the same expected number of allocations. Let us call this number of allocations $T_{ij}$. Since any service $j$ can be allocated at most $T/d$ times we have
     \begin{equation}\label{eq:bound-sum-Tij}
     \sum_{i=1}^{\sqrt{n}} \sum_{p \in G_i} T_{ij} \le \frac{T}{d} \ \Rightarrow \sum_{i=1}^{\sqrt{n}} T_{ij} \le \frac{T}{d\sqrt{n}}
     \end{equation}
We now bound the expected social welfare of any randomized and anonymous matching policy with the given reward profile. For any agent $i \in [\sqrt{n}]$, the maximum expected utility over the $T$ rounds is at most $T_{ii} + \sum_{j \neq i} T_{ij} \frac{n-j}{10 n^3 d} \le T_{ii} + O\left( \frac{T}{nd}\right)$. Now consider an agent $i + \ell \sqrt{n}$ for $\ell \in [\sqrt{n}-1]$. Such an agent's utility over the $T$ rounds is at most $T_{ii}O\left(\frac{1}{\sqrt{n}} \right) + \sum_{j \neq i, j \le \sqrt{n}} T_{ij} \frac{1}{\sqrt{n}} + \sum_{j > \sqrt{n}} T_{ij} \frac{n-j}{10n^3 d} \le O\left(\frac{1}{\sqrt{n}} \right) \sum_{j=1}^{\sqrt{n}} T_{ij} + O\left(\frac{T}{nd} \right)$. Therefore, the total utility over all the $n$ agents is bounded by
\begin{align*}
&\sum_{i=1}^{\sqrt{n}}T_{ii} + O\left( \frac{1}{\sqrt{n}}\right) \sum_{i=1}^{\sqrt{n}}\sum_{\ell=1}^{\sqrt{n}-1} \sum_{j=1}^{\sqrt{n}} T_{ij} + O\left( \frac{T}{d}\right) \\
&\le \sum_{i=1}^{\sqrt{n}} T_{ii} + \sum_{i=1}^{\sqrt{n}} \sum_{j=1}^{\sqrt{n}} T_{ij} + O\left( \frac{T}{d}\right)  \\
&\le 2 \sum_{j=1}^{\sqrt{n}} \sum_{i=1}^{\sqrt{n}} T_{ij} +  O\left( \frac{T}{d}\right)  \\
& \le 2 \sum_{j=1}^{\sqrt{n}} \frac{T}{d\sqrt{n}} + O\left(\frac{T}{d}\right) = O\left(\frac{T}{d}\right)
\end{align*}
where the last line follows from equation \eqref{eq:bound-sum-Tij}. On the other hand, any deterministic and non-anonymous allocation rule that always assigns service $i$ to agent $i$ every $D$ rounds achieves a social welfare of at least $\sqrt{n}\frac{T}{d}(1 - \frac{1}{10nd}) \ge O\left(\frac{T \sqrt{n}}{d} \right)$. This establishes a bound of $O(\sqrt{n}) = O(\sqrt{s})$ on distortion.
\end{proof}

\subsection{Proof of Theorem~\ref{thm: incentive ratio of RRSD}}

\begin{proof}
    Without loss of generality, assume that agent $k$ is selected at random in the $k$th position of the permutation $\sigma$ sampled by $\texttt{RRSD}$. Assume, for the moment, that agents $1$ to $k-1$ are not allocated any services. Additionally, suppose that agent $k$ is free to choose its own allocation of services independent of the $\texttt{RRSD}$ algorithm. Under these assumptions, agent $k$ is posed with an offline blocking bandits problem as described in \cite{Basu:2019ui}. The solution proposed by $\texttt{RRSD}$ corresponds to a greedy approach in which the best service available is allocated at each time step. Thus, proving that such a  greedy algorithm has an approximation ratio of $1 - \frac{1}{e}$ implies the result in this restricted case. This fact was proven in \cite{Basu:2019ui}. We will show that this result holds more generally, regardless of the allocations chosen by $\texttt{RRSD}$ in previous time steps. 
    
    
    Again, assume agent $k$ is free to choose its own allocation, independent of $\texttt{RRSD}$. That is, agent $k$ is tasked with solving the following integer linear programming problem (ILP):
    \begin{align*}
        \max_{x_{t,j}}\quad & \sum^{T}_{t=1}\sum^{s}_{j=1}\mu_{k,j}x_{t, j}  \\
        \text{s.t. }\quad &  x_{t, j} \in \{0, 1\} &\quad \forall j \in S \\
        \quad & y_{t, j} + x_{t, j} \leq 1 &\quad \forall t \in [T], \forall j \in S \\
        &\sum^{s}_{j=1}x_{t, j} = 1 \quad &\forall t \in [T] \\
        &\sum_{t \in [D_{k, s}]}x_{t + t_{0}, j} \leq 1 &\quad \forall t_{0} \in T, \forall j \in S
    \end{align*}
    The variables $x_{t, j}$ indicate whether agent $k$ is assigned service $j$ at time step $t$. Meanwhile, the constants $y_{t, j}$ indicate whether agent $k$ cannot be assigned service $j$ on time step $t$ due to blocking delay constraints imposed by allocations of service $j$ to agents $1$ through $k-1$. The second set of constraints ensure that the assignments chosen by agent $k$ do not breach the delay constraints imposed by preexisting assignments of services to agents $1$ to $k-1$. The third set of constraints ensure that agent $k$ may only be matched to one service at each time step. Lastly, the fourth set of constraints ensure that agent $k$ chooses a sequence of assignments which obeys its own blocking delay constraints.
    
    We will proceed to develop an upper bound on this ILP through a series of relaxations. We will then compare this upper bound to an assignment which is outperformed by \texttt{RRSD} to prove an asymptotic bound on the incentive ratio, as desired. 
    
    \textbf{Computing the upper bound of the optimal solution}. We derive an upper bound for this ILP through a series of relaxations. First of all, we relax the integer constraints, so that at each time step agent $k$ can assign itself a fractional mixture of services. Additionally, we replace the constants $y_{t , j}$ with variables $z_{t, j}$ constrained to lie in $[0, 1]$. The idea in introducing these variables is to remove the blocking constraints imposed by the previous players and replace it with a constraint that stipulates that the total reduction in the time horizon available for agent $k$ to assign itself each service $j$ must remain the same. That is, agent $k$ is free to fractionally redistribute the blocked parts of the time horizon imposed by the previous $k-1$ agents. This results in the following linear program (LP):
    \begin{align*}
        \max_{x_{t,j},\: z_{t, j}}\quad & \sum^{T}_{t=1}\sum^{s}_{j=1}\mu_{k,j}x_{t, j}  \\
        \text{s.t. }\quad &  x_{t, j} \in [0, 1] \quad &\forall j \in S\\
        \quad &z_{t, j} + x_{t, j} \leq 1 \quad &\forall t \in [T], \forall j \in S \\
        &\sum^{s}_{j=1}x_{t, j} = 1 \quad &\forall t \in [T] \\
        &\sum_{t=1}^{T}z_{t, j} = \sum^{T}_{t=1}y_{t, j} \quad &\forall j \in [k-1]\\
        &\sum_{t \in [D_{k, s}]}x_{t + t_{0}, j} \leq 1 \quad &\forall t_{0} \in T, \forall j \in S
    \end{align*}
    It should be immediately obvious that this problem can be reformulated further, and, in fact, the individual fractional assignments per time step can be replaced with fractional assignments of agents to services for the entire time horizon. Similarly, the newly introduced auxiliary variables $z_{t, j}$ can be removed completely. In other words, it is clearly optimal for agent $k$ to spread the blocked parts of the time horizon evenly across all time slots, and then greedily match services to itself in each time step whilst obeying its own delay constraints. Therefore, it only matters how often each service is matched to agent $k$, as the fractional amount matched for every time step will be the same. This leads us to the following, equivalent, LP reformulation:
    \begin{align*}
        \max_{a_{j}}\quad & \sum^{s}_{j=1}a_{j}\mu_{k,j} \quad &\forall j \in S  \\
        \text{s.t. }\quad &  a_{j} \in [0, T/ D_{k, j}] \\
        \quad & a_{j} + \sum^{T}_{t=1}y_{t, j} \leq T \quad &\forall j \in S \\
        & \sum^{s}_{j=1}a_{j} = T
    \end{align*}
    Additionally, we define $C_{j} = \{t \in [T] \: : \: y_{t, j} = 1\}$ as the set of time steps in which agent $k$ cannot (in practice) be matched with service $j$ because of delay constraints imposed by previous agents. 
    Next, we show that this LP can be further formulated as a fractional bounded knapsack problem as follows.
    
    Consider each service $j$ as an item with weight $D_{k,j}$ and value $\mu_{k,j}$. From this perspective, $a_j$ is the (fractional) number of times we pack item $j$ into a knapsack (whose capacity is $T$). Note that the maximum value $a_j$ can get in the previous LP is determined by the pattern of $C_j$, and is also capped by $T/ D_{k, j}$.
    Therefore, in our bounded knapsack formulation, we can replace the constraints of $a_j$ to be $a_j \leq T/ D_{k, j} - b_j$ where $b_j$ is the number of blocks caused by $C_j$. Note that in general $b_j \neq |C_j|$, as it heavily depends on the pattern of the blocks. 
    Since $a_j \geq 0$, we have that $\frac{T}{D_{k,j}} \geq b_j$.
    It is well known that this fractional  bounded knapsack admits the optimal solution $\forall j \in S, a^{*}_{j} = \min \{T/D_{k,j} - b_j, (T - b_j - \sum_{l=1}^{j-1}a_{l}^{*})^{+}\}$. Note that the solution $a^{*}_{j}$ implicitly specifies an upper bound for the original ILP.
    
     \textbf{Computing the lower bound of the greedy sequence of matches}. Now consider the greedy sequence of matches for agent $k$ generated by the $\texttt{RRSD}$ algorithm. Let $a^{g}_{j}$ denote the number of times service $j$ is matched to agent $k$ by $\texttt{RRSD}$. Similarly let $A_{j}$ denote the set of time slots in which agent $k$ is allocated services 1 to $j-1$. The time slot where the periodic matching of service $j$ to agent $k$ collides with previous matches is denoted by $col_{j} = \{t \in A_{j}\cup C_{j} \: : \: D_{k, j}\;|\;t \}$. The number of times service $j$ is assigned to agent $k$ is at least $\lceil (T - |col_{j}|)/D_{k, j}\rceil$. This holds because for service $j$ we can remove the time slots with collisions and perform periodic placement perfectly with the remaining. $T - |col_{j}|$ time slots. Note that $|col_{j}| \leq \sum^{j-1}_{l=1}a^{g}_{l} + \sum^{T}_{t=1}y_{t, j} - |A_{j}\cap C_{j}|$.
    
    We now define for each $j \in S$, $a^{\prime}_{j} = T_{j}/D_{k, j} - b_j$,
    and $T_{j} = \left(T - \sum^{j-1}_{l=1}a^{\prime}_{l} + |A_{j}\cap C_{j}|\right)^{+}$. 
    We claim that $\sum^{j}_{l=1}a^{g}_{l} \geq \sum^{j}_{l=1}a^{\prime}_{l}$. In turn, this immediately implies that $\sum^{s}_{j=1}a_{j}^{g}\mu_{k, j} \geq \sum^{s}_{j=1}a_{j}^{\prime}\mu_{k, j}$. In other words, we will attempt to prove that the solution $a^{\prime}_{j}$ specifies a lower bound on the performance of \texttt{RRSD}. 
    
    We prove the claim using induction on $j$. We know that $a^{g}_{1} \geq \lceil (T - b_1)D_{1, k}\rceil$, so the base case is satisfied. By the inductive hypothesis, assume that $\sum_{l=1}^{j}a^{g}_{l} \geq \sum^{j}_{l=1}a^{\prime}_{l}$ for all $j < j^{\prime}$. We have:
    \begin{align*}
        &a^{g}_{j^{\prime}} \geq \lceil (T - |col_{j^{\prime}}|)/D_{k, j^{\prime}}\rceil \\
        &\geq \frac{1}{D_{k, j^{\prime}}}\left(T - \sum^{j^{\prime}-1}_{l=1}a^{g}_{l} - b_{j^{\prime}} + |A_{j^{\prime}}\cap C_{j^{\prime}}|\right) \\
        &= \frac{1}{D_{k, j^{\prime}}}\left(T - \sum^{j^{\prime}-1}_{l=1}a^{\prime}_{l} - \sum^{j^{\prime}-1}_{l=1}\left(a^{g}_{l} - a^{\prime}_{l}\right) - b_{j^{\prime}} + |A_{j^{\prime}}\cap C_{j^{\prime}}|\right) \\
        &=  a^{\prime}_{j^{\prime}} - \frac{1}{D_{k, j^{\prime}}}\sum^{j^{\prime}-1}_{l=1}\left(a^{g}_{l} - a^{\prime}_{l}\right)
    \end{align*}
    Thus we have that 
    \begin{equation*}
        \sum^{j'}_{l=1}(a^{g}_{l} - a^{\prime}_{l}) \geq (1 - 1/D_{k, j^{\prime}})\sum^{j^{\prime}}_{l=1}(a^{g}_{l} - a^{\prime}_{l})
    \end{equation*}
    which means $\sum^{j^{\prime}}_{l=1}a^{g}_{l} \geq \sum^{j^{\prime}}_{l=1}a^{\prime}_{l}$, and the inductive hypothesis holds. In what follows, we will refer to $a^{\prime}$ as the lower bound solution. Similarly, we will refer to $a^{*}$ as the upper bound solution.
    
 \textbf{Comparing the bounds}.  
Note that for any $j$, if $\frac{T}{D_{k,j}} = b_j$ then both the upper bound and lower bound solutions will not contain service $j$ (as $a^{\prime}_j \leq a^*_j = 0$).  
Therefore, without loss of generality, we assume that 
$\frac{T}{D_{k,j}} > b_j$. 
We set $D^{\prime}_{k,j}$ such that $\frac{1}{D^{\prime}_{k,j}} = \frac{1}{D_{k,j}} - \frac{b_j}{T}$. 
With induction in $j$ we can show that $a^{\prime}_j = \frac{T}{D^{\prime}_{k, j}}\prod_{l=1}^{j-1}(1-\frac{1}{D^{\prime}_{k, l}})$. 
In addition, we can also show that $a^*_j \leq \frac{T}{D^{\prime}_{k,j}} +1$.
The remainder of the proof consists of showing that the allocation $a^{\prime}_{j}$ performs asymptotically well compared to the optimal allocation $a^{*}$ via the closed forms above, which in turn implies the desired asymptotic bound on the incentive ratio of \texttt{RRSD}. The proof of this fact is contained in the proof for the blocking bandits setting provided by \cite{Basu:2019ui}, and as a result is omitted. We point the enthusiastic reader to the 'Greedy Lower Bound vs LP Upper Bound' subsection of the proof of Theorem 3.3 in \cite{Basu:2019ui}.

\end{proof}

\subsection{Proof of Theorem~\ref{thm:distortion of RRSD}}

\begin{proof}

Our proof proceeds by upper bounding the distortion of \texttt{RRSD} by the distortion of \texttt{RSD} on a new set of reward profiles. The distortion of $\texttt{RRSD}$ is given as
     $$
     \rho = \sup_{\mu,D} \frac{\textrm{SW}(M^{*}(\mu, D), \mu)}{\mathbb{E}[\textrm{SW}(\texttt{RRSD}(\mathord{\succ}, D), \mu)]}
     $$
     We first upper bound $\textrm{SW}(M^{*}(\mu, D), \mu)$ by expected welfare on a new instance with no blocking. For this new instance, there are $\tilde{D} n$ total agents and $s$ services. The $D_{\max} n$ agents are partitioned into $n$ groups, one for each agent in the original profile. We will write the group of agents corresponding to agent $i$ in the original profile by $G_i$ and the $\ell$-th agent in group $G_i$ will be denoted by $i_\ell$. The new reward profile $\tilde{\mu} \in \mathbb{R}^{D_{\max} n \times s}$ is defined as follows
     $$
     \tilde{\mu}_{i_\ell, j} = \frac{\mu_{i,j}}{D_{ij}} \quad \forall j \ \forall i_\ell \in G_i
     $$
     In the new instance, there is no blocking i.e. all the blocking lengths are one. 
     
     Now let $\pi^* = M^*(\mu,D)$ be the optimal policy for the original instance $\mu$ with blocking. We now construct a new policy for the non-blocking setting with reward matrix $\tilde{\mu}$. The new policy $\tilde{\pi}$ works as follows. At time $t$, if $\pi^*$ allocated service $j$  to agent $i$, then we select one available agent from the group $G_i$ (say $i_\ell$) and repeatedly allocate service $j$ to agent $i_\ell$ for the next $D_{i,j}$ rounds i.e. we set $\tilde{\pi}_{t'}(i_\ell) = j$ for $t'=t, t+1,\ldots, t+D_{i,j} - 1$. Notice that, since there are $\tilde{D}$ agents in group $G_i$, it is always possible to find such an available service $i_{\ell}$ whose allocation hasn't been determined at round $t$. This is because under the original policy $\pi^*$, at any time at most $\tilde{D}$ services can be simultaneously blocked as result of being assigned to agent $i$. Algorithm~\ref{alg:new-policy} describes how to construct the new policy $\tilde{\pi}$ from the old policy $\pi^*$.
     
     Let us now compare the social welfare of policy $\pi^*$ with reward instance $\mu$, and social welfare of policy $\tilde{\pi}$ with reward instance $\tilde{\mu}$. Notice that whenever $\pi^*$ allocates service $j$ to agent $i$, a corresponding agent (say $i_\ell$) is assigned service $j$ exactly $D_{i,j}$ times under the new policy $\tilde{\pi}$. As the new rewards are normalized by the blocking lengths, this implies that the total reward gathered by $i$ under $\pi^*$ is the same as the total reward gathered by all the agents in $G_i$ under the new policy $\tilde{\pi}$. Thus, summing over all the agents, we have $\SW(\pi^*, \mu) = \SW(\tilde{\pi}, \tilde{\mu})$. 
     
     Now observe that, under the new instance $\tilde{\mu}$, there is no blocking, so the optimal allocation rule is obtained by applying a fixed matching (say $\sigma^*$)\footnote{Ideally $\sigma^*$ is an assignment from $D_{\max} n$ agents to $s$ services based on the optimal achievable social welfare and not a one-to-one matching. But we will use the term matching instead of assignment to be consistent with the rest of the paper.} repeatedly over the $T$ rounds. Let $\SW_0(\sigma^*, \tilde{\mu})$ be the one-round social welfare of the matching $\sigma^*$. Then under the non-blocking reward instance $\tilde{\mu}$, the best possible social welfare is $T \cdot \SW_0(\sigma^*, \tilde{\mu})$. This gives us the following bound on the welfare of the original policy $\pi^*$ for the blocking instance.
     \begin{equation}
         \label{eq:ubd-orig-policy}
         \SW(\pi^*, \mu) \le \SW(\tilde{\pi}, \tilde{\mu}) \le T \cdot \SW_0(\sigma^*, \tilde{\mu})
     \end{equation}
     
     \begin{algorithm}[!t]
     \SetAlgoLined
\KwInput{ $T$, $N$, $S$, $D$, Policy $\pi^*$}
\KwOutput{ $\tilde{\pi}$ }
\tcc{Matrix $F$ keeps track of the available agents within each group $G_i$.}
$F(i,i_\ell) = 0 \ \forall i_\ell \in G_i\ \forall i \in N$ \\
 \For{$t \in \set{1,\ldots,T}$}{
    \For{$i \in N$}{
        \If{$\pi^*_t(i) = j$}{
            \tcc{Find an available agent within group $G_i$}
            Choose $i_\ell$ s.t. $F(i,i_\ell) = 0$\\
            $\tilde{\pi}_{t'}(i_\ell) = j \ \forall t' \in [t,\ldots,t+D_{ij} - 1]$\\
            $F(i,i_\ell) = 1$\\
        }
        \tcc{If any arm becomes available under $\pi^*$, then we make corresponding agents available}
        \For{$j \in S$}
        {
            \If{$\pi^*_{t-D_{i,j} + 1}(i) = j$}
            {
                \tcc{$i_j \in G_i$ is the corresponding agent with repeated allocations from $\set{t-D_{i,j}+1,\ldots,t}$ under $\tilde{\pi}$}
                $F(i,i_j) = 0$\\
            }
        }
    }
    
 }
\Return $\tilde{\pi}$\\
 \caption{Policy Conversion ($\pi^* \rightarrow \tilde{\pi}$).\label{alg:new-policy}}
     \end{algorithm}

       We now prove a lower bound on the expected social welfare of \texttt{RRSD} under the original reward instance $\mu$.  \texttt{RRSD} is a randomized policy, but for a given order of agents, the sequence of assignments generated by \texttt{RRSD} becomes a deterministic policy. Any such deterministic policy $\pi$ can be converted to an equivalent policy (say $g(\pi)$) through Algorithm~\ref{alg:new-policy}. Moreover, the new policy $g(\pi)$ preserves the social welfare under the new reward instance $\tilde{\mu}$. The expected social welfare of \texttt{RRSD} is given as 
       \begin{align*}
       &\E[\SW(\texttt{RRSD}, \mu)] = \E_{\pi \sim \texttt{RRSD}}[\SW(\pi, \mu)] \\&= \E_{\pi \sim \texttt{RRSD}}[\SW(g(\pi), \tilde{\mu})]
       \end{align*}
       The last line follows from the welfare preservation property of Algorithm~\ref{alg:new-policy}. Moreover, given a policy $\pi$, the new policy $g(\pi)$ actually assigns the same service repeatedly to the same agent. This is because whenever the original \texttt{RRSD} assigns a new service to agent $i$, algorithm~\ref{alg:new-policy} selects a new agent in $G_i$, and assigns the new service to the new agent. Now given $g(\pi)$ consider a simpler policy $g'(\pi)$ which only makes the first repeated assignments to any member from $G_i$ for all $i$. Since this new policy makes fewer assignments than $g(\pi)$ we have the following inequality.
       $$
       \E_{\pi \sim \texttt{RRSD}}[\SW(g(\pi), \tilde{\mu})] \ge \E_{\pi \sim \texttt{RRSD}}[\SW(g'(\pi), \tilde{\mu})]
       $$
       Since the sequence $\pi$ was generated under $\texttt{RRSD}$ for the original instance $\mu$, an alternative way to generate the sequence $g'(\pi)$ is the following: first random choose an order of the set $N$, and then replace each agent $i$ in this sequence with a randomly choose agent from $G_i$. Let us call this new randomised policy \texttt{GRSD} (short for grouped \texttt{RSD}). Then we have,
       \begin{align*}
       \E_{\pi \sim \texttt{RRSD}}[\SW(g'(\pi), \tilde{\mu})] &= T \cdot \E[\SW_0(\texttt{GRSD}, \tilde{\mu})] \\
       &= T \cdot \E[\SW_0(\texttt{RSD}, \tilde{\mu})] 
       \end{align*}
       The last equality follows from the following two observations. Under \texttt{GRSD}, the probability that an agent from group $i$ shows up at position $j$ equals $1/n$. On the other hand, under \texttt{RSD}, the probability that an agent from group $i$ shows up at position $j$ equals $\tilde{D} \times (1/\tilde{D} n) = 1/n$. Second, conditioned on the event an agent from group $i$ shows up at position $j$, the expected utility of the agent is the same, as all the agents in group $G_i$ are duplicates of the original agent $i$ and have the same reward profile. Therefore, we have established the following lower bound on the expected welfare of \texttt{RRSD} under the original instance $\mu$.
       \begin{equation}\label{eq:lbd-rrsd}
       \E[\SW(\texttt{RRSD}, \mu)]  \ge T \cdot \E[\SW_0(\texttt{RSD}, \tilde{\mu})]
       \end{equation}

    We can now bound the distortion of \texttt{RRSD} as follows.
    \begin{align*}
        \rho &= \sup_\mu \frac{\textrm{SW}(\pi^\star, \mu)}{\mathbb{E}_{\pi \sim \texttt{RRSD}}[\textrm{SW}(\pi, \mu)]} \\&\le \sup_{\tilde{\mu}} \frac{T \cdot \textrm{SW}_0(\sigma^\star, \tilde{\mu})}{T\cdot \mathbb{E}_{\sigma \sim \texttt{RSD}}[\textrm{SW}_0(\sigma, \tilde{\mu})]} \\&= \sup_{\tilde{\mu}} \frac{\textrm{SW}_0(\sigma^\star, \tilde{\mu})}{\mathbb{E}_{\sigma \sim \texttt{RSD}}[\textrm{SW}_0(\sigma, \tilde{\mu})]} 
    \end{align*}
    Where the inequality is due to the lower bound from  eq. \ref{eq:lbd-rrsd} and the upper bound from eq. \ref{eq:ubd-orig-policy}.
    Since the last quantity is just the distortion of \texttt{RSD} in the one-shot matching setting, we can apply Lemma 4 from \cite{FFKZ14} and get a bound of $O(\sqrt{s})$ on the distortion.\footnote{\citet{FFKZ14} actually considered a setting where $n=s$, but their proof naturally generalizes for the setting with $n > s$}
\end{proof}

\subsection{Derandomised \texttt{RRSD}}
In this section, we present a deterministic matching policy for the offline SBM setting. More precisely, we present derandomised \texttt{RRSD} (\texttt{DRRSD}), which, as the name suggests, is a derandomised version of \texttt{RRSD}. Instead of sampling a single permutation, like \texttt{RRSD}, \texttt{DRRSD} uses a set of $4n^{2}\log(n)$ permutations. In addition, this set is constrained to ensure that the fraction of permutations in which an agent $i$ appears in the $j$th position is at least $\frac{1}{2n}$. The following lemma stipulates that such a set of permutations always exists.

\begin{lemma}\label{lem:derandomization}
There exists a set of $4n^2 \log(n)$ permutations over $n$ agents such that the fraction of times agent $i$ appears at the $j$th position is at least $\frac{1}{2n}$.
\end{lemma}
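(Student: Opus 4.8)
The plan is to use the probabilistic method. Fix a distribution over permutations---specifically, draw each of the $P = 4N^2\log N$ permutations independently and uniformly at random from the $N!$ permutations of the agents. For a fixed agent $i$ and a fixed position $j \in \{1,\dots,N\}$, and for each sampled permutation $\sigma_p$, let $X_{p}^{(i,j)}$ be the indicator of the event $\sigma_p(j) = i$ (equivalently $\sigma_p^{-1}(i) = j$). These indicators are i.i.d.\ Bernoulli$(1/N)$ across $p$, so the count $Y_{i,j} = \sum_{p=1}^{P} X_p^{(i,j)}$ has mean $P/N = 4N\log N$. I would then apply a multiplicative Chernoff bound to control the lower tail: $\Pr\big[Y_{i,j} < \tfrac12 \cdot \tfrac{P}{N}\big] \le \exp\!\big(-\tfrac{1}{8}\cdot\tfrac{P}{N}\big) = \exp(-\tfrac12 N\log N) = N^{-N/2}$.

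Next I would take a union bound over all $N^2$ choices of the pair $(i,j)$: the probability that some $Y_{i,j}$ falls below $\tfrac12 \cdot \tfrac{P}{N}$ is at most $N^2 \cdot N^{-N/2}$, which is strictly less than $1$ for all $N \ge 2$ (and the $N=1$ case is trivial). Hence there exists a realization of the $P$ permutations for which every agent $i$ appears in every position $j$ at least $\tfrac12\cdot\tfrac{P}{N}$ times; dividing by $P$, the fraction of permutations placing $i$ at position $j$ is at least $\tfrac{1}{2N}$, which is exactly the claim. The same argument with the symmetric upper-tail Chernoff bound would also give a matching $O(1/N)$ upper bound on this fraction if that were needed downstream (the $\Theta(N\log N)$ remark in the main text suggests it is), at no extra asymptotic cost.

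I do not expect any serious obstacle here; the only points requiring a little care are (i) choosing the constant in $P = 4N^2\log N$ large enough that the Chernoff exponent beats the $\log(N^2) = 2\log N$ coming from the union bound---the factor $4$ is comfortably sufficient---and (ii) noting that the events $\{\sigma_p(j)=i\}$ for different $(i,j)$ within a \emph{single} permutation are dependent, but this is irrelevant since the union bound does not require independence across $(i,j)$, and the independence we actually use is only across the index $p$, which holds by construction. If one wanted to avoid even the mild arithmetic, one could alternatively invoke a standard balls-into-bins / discrepancy argument, but the direct Chernoff-plus-union-bound route is cleanest and self-contained.
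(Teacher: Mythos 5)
Your proof is correct and follows essentially the same route as the paper's: both arguments draw $P = 4N^2\log N$ i.i.d.\ uniform permutations, apply a concentration inequality to the number of times agent $i$ lands in position $j$, and take a union bound over the $N^2$ pairs $(i,j)$; the only difference is that you use a multiplicative Chernoff bound (tail $N^{-N/2}$) where the paper uses an additive Hoeffding bound (tail $O(N^{-2})$), and your sharper exponent actually makes the union bound close more cleanly than the paper's own computation. The lone nitpick is that $N^2\cdot N^{-N/2}<1$ fails for $N\in\{2,3,4\}$, but this is a small-$N$ constant issue that affects neither the asymptotic claim nor how the lemma is used.
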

\begin{proof}
    The proof is by the probabilistic method. Let us draw $P$ permutations over the $n$ agents uniformly at random. Let $X_{ij}$ be the fraction of times agent $i$ appears at $j$th position over the $P$ permutations. Then $\E[X_{ij}] = 1/n$. Moreover, from the Chernoff-Hoeffding inequality,
    $$
    P\left(X_{ij} \le \frac{1}{2n} \right) \le 2 e^{-2P\frac{1}{4n^2}} = 2 e^{-\frac{P}{2n^2}}.
    $$
    Moreover, by a union bound over the $n$ agents and $n$ positions we get that $$
    P\left(\exists i,j \ X_{ij} \le \frac{1}{2n} \right) \le 2 e^{-\frac{P}{2n^2}}.
    $$
    Therefore, if $P \ge 4n^2 \log(n)$, the probability of observing a set of permutations such that each $X_{ij} \ge 1/2n$ is positive. This implies that if $P = 4n^2 \log(n)$, we can find a required set of permutations.
\end{proof}
\texttt{DRRSD} splits the time horizon into evenly sized blocks. In each block, a different permutation is used. Within each block, agents are assigned to services by the same greedy method used by \texttt{RRSD}, with one caveat. If the blocking delay caused by the assignment of an agent-service pair would overrun into the next block, then this assignment is skipped. This to ensure that all services will be available at the beginning of each block. The pseudocode for \texttt{DRRSD} is presented in Algorithm \ref{alg:DRRDS}.

Next, we prove that \texttt{DRRSD} incurs a distortion of order $O(s)$, which matches the lower bound we established for the distortion of deterministic policies in Theorem \ref{thm: distortion of deterministic policies}.

\subsection{Proof of Theorem \ref{thm:det-upper}}
\begin{proof}
We will write $i_j$ to denote agent $i$'s $j$th favourite service. That is, $i_{j} = \mathord\succ_{i}(j)$. By Lemma \ref{lem:derandomization}, agent $i$ gets her $j$th favourite service (or better) in at least $\frac{P}{2n}$ groups. Within any such group,  there are $T/P$ time slots, and agent $i$ is assigned her $j$th favourite service at least $\left \lfloor T/(PD_{i,i_j}) \right \rfloor$ times. Therefore, the total welfare guaranteed by \texttt{DRRSD} is at least
\begin{align*}
&\sum_{i=1}^n \sum_{j=1}^S \mu_{i,i_j} \frac{P}{2n} \left \lfloor \frac{T}{PD_{i,i_j}} \right \rfloor\\ \ge &\sum_{i=1}^n \sum_{j=1}^s \mu_{i, i_j}\frac{P}{4n}  \frac{T}{PD_{i,i_j}} \\ &= \frac{T}{4n} \sum_{i=1}^n \sum_{j=1}^s \frac{\mu_{i, i_j}}{D_{i,i_j}}
\end{align*}

On the other hand, consider a matching algorithm that assigns service $i_j$ to agent $i$ exactly $A_{i,j}$ times. Whenever item $j$ is matched to agent $i$, it is blocked for $D_{i,i_j}$ rounds. This implies that $A_{i,j} \le T/D_{i,i_j}$. Therefore, the maximum welfare achievable by such a matching algorithm is at most
    \begin{align*}
        \sum_{i=1}^n \sum_{j=1}^s \mu_{i, j} A_{i,j} \le \sum_{i=1}^n \sum_{j=1}^s \mu_{i,j} \frac{T}{D_{i,i_j}}
    \end{align*}
    This establishes that the distortion of \texttt{DRRSD} is at most $4n = O(s)$.
\end{proof}

\begin{algorithm}[t!] 
\SetAlgoLined
\KwInput{ $T$, $N$, $D$, $S$, $\mathord{\succ}$, and a set of $P=4n^2\log(n)$ permutations $\set{\sigma_1,\ldots, \sigma_P}$}
 $M = (m_{t})^{T}_{t=1} = (\emptyset)^{T}_{t=1}$\\
 \For{$p = 1, \dots, P$}{
 $\sigma = \sigma_{p}$ \\
 \For{$i = 1, \dots, n$}{
    \tcp{Select agent}
    $\text{ag} = \sigma(i)$ \\
    $\text{start} = (p-1)T/P$ \\
    $\text{end} = pT/P$ \\
    \For{$j = 1, \dots, s$}{
        \tcp{Select service}
        $\text{ser} = \tilde{\succ}_{\text{ag}}(j)$ \\
        \While{$\text{available}(M,\text{ag}, \text{ser}, \text{start}, \text{end})$}{
           $t = \text{earliest}(M, \text{ag}, \text{ser}, \text{start}, \text{end})$ \\
           \If{$\text{overrun}(\text{ag}, \text{ser}, t, \text{end})$}{\textbf{break}}{
           $M(t, \text{ag}) = \text{ser}$\\
         }
       }
    }
}
 }
\Return $M$
 \caption{\texttt{DRRSD} (Derandomized \texttt{RRSD})\label{alg:DRRDS}}
\end{algorithm}

\begin{algorithm}[t!] 
\SetAlgoLined
\KwInput{ $T$, $N$, $D$, $S$, $\Delta_{\text{min}}$}
$M = (m_{t})^{T}_{t=1} = (\emptyset)^{T}_{t=1}$\\
$\text{repeats} = \left\lceil2\log(2Tsn)/ \Delta_{\text{min}}^{2}\right\rceil$ \\
\tcp{Build a list of exploration assignments}
$\text{jobList} = \text{buildJobList}(n, s, \text{repeats})$ \\
$\text{explore} = \text{true}$ \\
$\text{waiting} = \text{false}$ \\
$\text{count} = 0$ \\
\For{$t= 1, \dots T$}{
    \tcp{Exploration phase}
    \If{$\text{explore}$}{
        \tcp{Greedily add remaining agent-service pairs to current matching}
        \For{$(i, j) \text{ in jobList}$}
        {   
            \If{\text{available}($M$, $i$, $j$, $t$)}
                {$M(t, i) = j$ \\
                $\text{jobList.remove}(i, j)$}
        }
        
        \tcp{Start waiting phase}
        \If{$\text{jobList.isEmpty}$}{
            $\text{explore} = \text{false}$\\
            $\text{waiting} = \text{true}$}
    }
    \tcp{Wait for all services to become available}
    \If{\text{waiting}}{
        \If{$\text{count} < \tilde{D}$}{
            $\text{count}\mathord{+}\mathord{+}$
        }
        
        \tcp{Start exploitation phase}
        \Else{
            $\tilde{\succ} = \tilde{\succ}^{t}$ \\
            Sample $\sigma$ \\
            $\text{waiting} = \text{false}$\\
        }
    }
    \tcp{Exploitation phase}
    \Else{
        \For{$i = 1, \dots n$}{
            $\text{ag} = \sigma(i)$ \\
            \For{$j = 1, \dots, s$}{
                $\text{ser} = \tilde{\succ}_{\text{ag}}(j)$ \\
                \If{$\text{available}(M, \text{ag}, \text{ser}, t)$}{
                    $M(t, \text{ag}) = \text{ser}$
                }
            }
        }
    }
}

\Return $M$
 \caption{\texttt{BRRSD} (Bandit \texttt{RRSD})\label{alg:BRRSD}}
\end{algorithm}

\subsection{Proof of Theorem~\ref{thm:bandit}}

The pseudocode for \texttt{BRRSD} is given in Algorithm \ref{alg:BRRSD}. We now restate the three claims of the Theorem \ref{thm:bandit}, and proceed with a proof below: \\
\emph{
(i) The dynamic $(1/\sqrt{s})$-regret of \texttt{BRRSD} is $O\left(\tilde{D}\sqrt{s}\log\left(Tsn\right)/\Delta^{2}_{\text{min}}\right)$. \\
(ii) The $(1-1/e)$-IC regret for all agents under \texttt{BRRSD} is $O\left(\tilde{D}s\log\left(Tsn\right)/\Delta^{2}_{\text{min}}\right)$. \\
(iii) The greedy algorithm used by \texttt{BRRSD} in the exploration phase uses at most twice as many time steps as the shortest feasible matching sequence which completes the required assignments.}

\begin{proof}
Claim (i) can be proved as follows. By the end of the exploration phase, we know that each agent has received a reward from being assigned each service at least $\left\lceil2\log(2Tsn)/\Delta_{\text{min}}^{2}\right\rceil$ times. In addition, note that this exploration phase takes at most $\tilde{D}s\left\lceil2\log(2Tsn)/\Delta_{\text{min}}^{2}\right\rceil + \tilde{D}$ rounds. By the Chernoff-Hoeffding inequality,  we have that for all agents $i$ and services $j$:
\begin{equation*}
    P\left(\left|\mu_{i, j} - \hat{\mu}_{i, j}\right| \geq \frac{\Delta_{\text{min}}}{2} \right) \leq \frac{1}{Tsn}
\end{equation*}

Thus, by the union bound and the assumption that all agents are mean-based, with probability $1 - 1/T$, the internal estimation of every agent will be correct. From now, unless explicitly stated, we will assume that all agents have learned the correct preference ordering by the end of the exploration phase. 

For the sake of simplicity, let $T_{1}$ denote the number of rounds for which the exploration phase runs, and let $T_{2}$ denote the number of rounds for which the exploitation phase runs. Similarly let $\textrm{OPT}_{1}$ denote the social welfare of the optimal matching sequence of length $T_{1}$, and $\textrm{OPT}_{2}$ denote the social welfare of the optimal matching sequence of length $T_{2}$. Furthermore, let $\textrm{SW}_{1}(\texttt{BRRSD})$ denote the social welfare generated by $\texttt{BRRSD}$ in the exploration phase, and $\textrm{SW}_{2}(\texttt{BRRSD})$ denote the social welfare generated by $\texttt{BRSSD}$ in the exploitation phase.

As each reward is bounded between $[0, 1]$, and the exploration phase proceeds for at most $\tilde{D}s\left\lceil2\log(2Tsn)/\Delta_{\text{min}}^{2}\right\rceil + \tilde{D}$ time steps, it is easy to show that
\begin{align*}
    \frac{1}{\sqrt{s}}\mathbb{E}\left[\textrm{OPT}_{1}\right] - \mathbb{E}\left[\textrm{SW}_{1}(\texttt{BRRSD})\right]  \leq \\ \frac{1}{\sqrt{s}}\left(\tilde{D}s\left\lceil2\log(2Tsn)/\Delta_{\text{min}}^{2}\right\rceil + \tilde{D}\right)
\end{align*}
In addition, by Theorem \ref{thm:distortion of RRSD}, and the assumption that agents are mean-based, we have the following lower bound:
\begin{equation*}
    \frac{1}{\sqrt{s}}\mathbb{E}\left[\textrm{OPT}_{2}\right] \leq \mathbb{E}\left[\textrm{SW}_{2}(\texttt{BRRSD})\right]
\end{equation*}
Let $\textrm{OPT}$ denote the social welfare of the optimal matching sequence of length $T$. Combining the bounds above and noting that $\textrm{OPT} \leq \mathbb{E}[\textrm{OPT}_{1}] + \mathbb{E}[\textrm{OPT}_{2}]$ we have:
\begin{align*}
    \frac{1}{\sqrt{s}}\mathbb{E}[\textrm{OPT}]- \mathbb{E}[\textrm{SW}(\texttt{BRRSD})] \leq \\
    \frac{1}{\sqrt{s}}\left(\tilde{D}s\left\lceil2\log(2Tsn)/\Delta_{\text{min}}^{2}\right\rceil + \tilde{D}\right)
\end{align*}
For the case when at least 1 agent does not learn the correct preference ordering by the end of the exploration phase, the $(\frac{1}{\sqrt{s}})$-dynamic regret of $\texttt{BRRSD}$ is bounded above by $\frac{1}{\sqrt{s}}nT$. Combining both cases together, we see that the dynamic $(\frac{1}{\sqrt{s}})$-regret of $\texttt{BRRSD}$ is bounded above by $\frac{1}{\sqrt{s}}\left(\tilde{D}s\left\lceil2\log(2Tsn)/\Delta_{\text{min}}^{2}\right\rceil + \tilde{D}\right) + \frac{1}{\sqrt{s}}n$, implying the desired regret bound. 

To prove claim (ii), note that the an agent can only affect its assignment of services in the exploitation phase, in which $\texttt{BRRSD}$ deploys the $\texttt{RRSD}$ algorithm. Thus, following a similar argument as above, replacing the use of Theorem \ref{thm:distortion of RRSD} with Theorem \ref{thm: incentive ratio of RRSD}, we find that the $(1- 1/e)$-IC regret of $\texttt{BRRSD}$ is  bounded above by $(1 - 1/e)\left(\tilde{D}s\left\lceil2\log(2Tsn)/\Delta_{\text{min}}^{2}\right\rceil + \tilde{D}\right) + (1-1/e)n$.

Finally, to prove claim (iii), we consider the following scheduling problem:

\textbf{Open Shop Scheduling:} An instance of the open shop problem consists of a set of $N$ machines and $S$ jobs. Associated with each job $j$ is a set of $n$ independent tasks $j_{1}, \dots, j_{n}$. The task $j$ for job $i$ must be processed on machine $i$ for an uninterrupted $D_{i, j}$ time units. A schedule assigns every task $j_{i}$ to a time interval $D_{i, j}$ so that no job is simultaneously processed on two different machines, and so that no machine simultaneously processes two different jobs. The makespan $C_{max}$ of a schedule is the longest job completion time. The optimal makespan is denoted by $C^{*}_{max}$.

It is easy to show that the exploration phase of $\texttt{BRRSD}$ reduces to an open shop scheduling problem in which there is a job for each agent $i$, and a task for each assignment of service $j$ to agent $i$. Similarly, observe that the assignment procedure used by $\texttt{BRRSD}$ is simply an implementation of the greedy algorithm for open shop scheduling as described by ~\cite{woeginger2018open}. The claim follows from that fact that the greedy algorithm is a 2-approximation for open shop scheduling (see \cite{woeginger2018open}).

\end{proof}


\section{Computational complexity of Offline SBM}

In this section, we investigate the computational complexity of the offline SBM setting. Observe that, when there is only one agent, offline SBM corresponds to the stochastic blocking bandit problem defined in \cite{Basu:2019ui}. Therefore, offline SBM inherits the complexity issues of the offline stochastic blocking bandit problem. More precisely, there is no pseudopolynomial time algorithm for offline SBM setting unless the randomised exponential time hypothesis \cite{exp} is false.
\begin{thm}
     Offline SBM problem does not admit a pseudopolynomial time policy unless the randomised exponential time hypothesis is false.
\end{thm}
\section{Failure of Trivial Approaches}
In this section, we provide a more detailed exposition regarding the failure of trivial approaches in the SBM setting. As discussed in Section \ref{sec:explain}, one may be tempted to employ a one-shot matching algorithm on each time step, such as RSD. However, such an approach does not result in a truthful algorithm. Consider a problem instance involving three agents, $\{1, 2, 3\}$ and three services, $\{a, b , c\}$, in which the ordinal preferences of each agent are defined as follows:
\begin{align*}
1:\: a \succ b \succ c \\
2:\: b \succ c \succ a \\
3:\: b \succ c \succ a
\end{align*}
and the delay matrix is given by
\begin{equation*}
    D = \begin{pmatrix}
        2 & 1 & 1 \\
        2 & 2 & 2 \\
        1 & 1 & 1
    \end{pmatrix}.
\end{equation*}
Now, consider an algorithm in which a permutation $\sigma$ of the agents is sampled uniformly at random, and RSD is run repeatedly on each time step in accordance with $\sigma$. In particular, consider the case where  $\sigma = (1, 2, 3)$. Table \ref{tab:true} shows the assignment of services when each agent reports their true preferences to the algorithm. Note that agent $1$ receives service $a$, followed by service $c$, repeatedly for the entire time horizon. Meanwhile, Table \ref{tab:lie} shows the assignment of services when agent $1$ lies and reports the preference ordering $b \succ a \succ c$. Note that in this case, agent $1$ receives service $b$ followed by service $a$, repeatedly for the entire time horizon.  Since agent $1$ prefers service $b$ over service $c$, agent $1$ prefers the assignment of services it receives from misreporting when $\sigma = (1, 2, 3)$. Moreover, it is easy to verify that misreporting $b \succ a \succ c$, does not worsen agent $1$'s welfare when any other permutation is sampled. Therefore, we conclude that agent $1$ can benefit by misreporting the preference ordering $b \succ a \succ c$. Hence, we conclude that the algorithm proposed is not truthful.

Observe that this example illustrates SBM's close connection to scheduling problems. If agent $1$ takes service $a$ on the first round, then service $b$ will be blocked and unavailable in the next. As a result, it is better for agent $1$ to secure service $b$ as quickly as possible before it becomes blocked by other agents, and pursue service $a$ later. Intuitively, assuming that all other agents are truthful, one can view an agent's report as selecting a certain (potentially randomised)  schedule of service assignments. For an algorithm to be truthful, truthful reporting must return the best (expected) schedule of assignments for the agent out of those offered by the algorithm, given the preferences of other agents. For an algorithm to be optimal in terms of distortion, the agent needs a wide range of schedules to choose from. However, as the number of offered assignment schedules increases, the more difficult it becomes to identify the optimal schedule, and thus ensure truthfulness, as the problem begins to represent a full blown scheduling problem. Hence, there is an implicit trade-off between the distortion of any algorithm and its guarantees with respect to truthfulness.
\begin{table}
        \centering
        \begin{tabular}{|c|cccc|}
            \hline
             & 1 & 2 & 3 & 4 \\
             \hline
            $1:\: a \succ b \succ c$ & a & c & a & c \\
            $2:\: b \succ c \succ a$ & b &   & b &  \\
            $3:\: b \succ c \succ a$ & c &   & c &  \\
            \hline
        \end{tabular}
        \quad
        \begin{tabular}{|c|cccc|}
            \hline
               & 1 & 2 & 3 & 4  \\
             \hline
             a &  1 & \ding{55} & 1 & \ding{55} \\
             b &  2 & \ding{55} & 2 & \ding{55} \\
             c &  3 & 1 & 3 & 1 \\
             \hline
        \end{tabular}
        \caption{The  assignment of services under permutation $\sigma$ for the first four time steps when each agent reports their true preferences. The left table describes the sequence of services assigned to each agent, whilst the right table describes the sequence of agents assigned to each service. We use \ding{55} to denote that a given service is blocked on the corresponding time step.}
        \label{tab:true}
    \end{table}

    \begin{table}
        \centering
        \begin{tabular}{|c|cccc|}
            \hline
             & 1 & 2 & 3 & 4 \\
             \hline
            $1:\: a \succ b \succ c$ & b & a & b & a \\
            $2:\: b \succ c \succ a$ & c & c  & c & c   \\
            $3:\: b \succ c \succ a$ & a &   &  &   \\
            \hline
        \end{tabular}
        \quad
        \begin{tabular}{|c|cccc|}
            \hline
               & 1 & 2 & 3 & 4  \\
             \hline
             a &  3 & 1 & \ding{55} & 1  \\
             b &  1 & \ding{55} & 1 & \ding{55}\\
             c &  2 & 2 & 2 & 2 \\
             \hline
        \end{tabular}
        \caption{The  assignment of services under permutation $\sigma$ for the first four time steps when agents $2$ and $3$ report their preferences truthfully, and agent $1$ misreports $b \succ a \succ c$. The left table describes the sequence of services assigned to each agent, whilst the right table describes the sequence of agents assigned to each service. We use \ding{55} to denote that a given service is blocked on the corresponding time step.}
        \label{tab:lie}
    \end{table}

\section{Additional Pseudocode}
In this section, for clarity, we provide additional pseudocode regarding the subroutines used by \texttt{RRSD}, \texttt{DRRSD}, and \texttt{BRRSD}. Pseudocode for available(), which checks whether a service can be assigned to an agent, is given in Algorithm \ref{alg:avai}. The pseudocode for earliest(), which checks the earliest time step in which an agent can be assigned a service, is given in Algorithm \ref{alg:earliest}. The pseudocode for overrun(), which checks whether an assignment of a service in one block of \texttt{DRRSD} will cause a blocking delay in the next block, is given in Algorithm \ref{alg:overrun}. Lastly, the pseudocode for buildJobList(), which specifies the agent-service assignments for \texttt{BRRSD} to complete during the exploration phase, is given in Algorithm \ref{alg:job}.
\begin{algorithm}[t!] 
\SetAlgoLined
\SetKwFunction{Av}{available}
\SetKwProg{Fn}{Function}{:}{}
\Fn{\Av{$M$, $i_{1}$, $j$, $t_{0}$}}{
    \If{$M(t_{0}, i_{1}) \neq 0$}{
        \Return{false}
    }
    \For{$i = 1, \dots, n$}{
        \For{$t = t_{0} - D_{i, j}+ 1, \dots, t_{0} + D_{i, j} - 1$}{
            \If{$M(t, i) = j$}{
                \Return{false}
            }
        }
    }
    \Return{true}
}
\textbf{End Function} \\
\Fn{\Av{$M$, $i$, $j$, \text{start}, \text{end}}}{
    \For{$t=\text{start}, \dots, \text{end}$}{
        \If{$\text{available}(M, i, j, t)$}{
            \Return{\text{true}}
        }
    }
    \Return{\text{false}}
}
\textbf{End Function} \\
\Fn{\Av{$M$, $i$, $j$}}{
    $\text{start} = 1$ \\
    $\text{end} = T$ \\
    \Return{\text{available}($M$, $i$, $j$, \text{start}, \text{end})}
}
\textbf{End Function}
 \caption{Different versions of the auxiliary process available\label{alg:avai}}
\end{algorithm}

\begin{algorithm}[t!] 
\SetAlgoLined
\SetKwFunction{ear}{earliest}
\SetKwProg{Fn}{Function}{:}{}
\Fn{\ear{$M$, $i$, $j$, \text{start}, \text{end}}}{
    \For{$t = \text{start}, \dots, \text{end}$}{
        \If{$\text{available}(M, i, j, t)$}{
            \Return{$t$}
        }
    }
    \tcp{Failure state}
    \Return{0}
}
\textbf{End Function} \\
\Fn{\ear{$M$, $i$, $j$}}{
    $\text{start} = 1$ \\
    $\text{end} = T$ \\
    \Return{$\text{earliest}(M, i, j , start, end)$}
}
\textbf{End Function}

 \caption{Different versions of the auxiliary process earliest\label{alg:earliest}}
\end{algorithm}

\begin{algorithm}[t!] 
\SetAlgoLined
\SetKwFunction{ear}{overrun}
\SetKwProg{Fn}{Function}{:}{}
\Fn{\ear{$M$, $i$, $j$, $t$, \text{end}}}{
    \If{$t + D_{i, j} - 1 \geq \text{end}$}{
        \Return{false}
    }
    \Return{true}
}
\textbf{End Function}

\caption{The auxiliary process overrun\label{alg:overrun}}
\end{algorithm}

\begin{algorithm}
\SetAlgoLined
\SetKwFunction{job}{buildJobList}
\SetKwProg{Fn}{Function}{:}{}
\Fn{\job{$n$, $s$, \text{repeat}}}{
    list = [] \\
    \For{$i=1, \dots, n$}{
        \For{$j = 1, \dots, s$}{
            \For{$k=1, \dots, \text{repeat}$}{
                list.append(($i$, $j$))
            }
        }    
    }
}
\textbf{End Function}

 \caption{The auxiliary process buildJobList\label{alg:job}}
\end{algorithm}


\end{document}